\newcommand{\calcium}{$^{40}\rm{Ca}^+$}
\def\RSthmtxt{theorem~}\newref{thm}{name = \RSthmtxt}}
\def\RSlemtxt{lemma~}\newref{lem}{name = \RSlemtxt}}
\numberwithin{equation}{section}
\numberwithin{figure}{section}
\theoremstyle{plain}
\newtheorem{thm}{\protect\theoremname}
\theoremstyle{definition}
\newtheorem{defn}[thm]{\protect\definitionname}
\theoremstyle{plain}
\newtheorem{lyxalgorithm}[thm]{\protect\algorithmname}
\theoremstyle{plain}
\newtheorem{lem}[thm]{\protect\lemmaname}
\theoremstyle{remark}
\newtheorem{rem}[thm]{\protect\remarkname}
\theoremstyle{plain}
\newtheorem{assumption}[thm]{\protect\assumptionname}
\providecommand{\algorithmname}{Algorithm}
\providecommand{\assumptionname}{Assumption}
\providecommand{\definitionname}{Definition}
\providecommand{\lemmaname}{Lemma}
\providecommand{\remarkname}{Remark}
\providecommand{\theoremname}{Theorem}
\def\be{\begin{equation}}
\def\ee{\end{equation}}
\newcommand{\bra}[1]{\langle #1|}
\newcommand{\ket}[1]{|#1 \rangle}
\newcommand{\tr}{\mathrm{Tr}}
\newcommand{\subfigimg}[3][,]{%
	\setbox1=\hbox{\includegraphics[#1]{#3}}%
	\leavevmode\rlap{\usebox1}%
	\rlap{\hspace*{2pt}\raisebox{\dimexpr\ht1-0.5\baselineskip}{{\bfseries \large\textsf{#2}}}}%
	\phantom{\usebox1}%
}
\definecolor{KB}{rgb}{0.4,0.3,0.9}
\begin{document}
\title{Self-Testing of a Single Quantum System: Theory and Experiment}

\author{Xiao-Min Hu}
\email{These five authors contributed equally to this work.}
\affiliation{CAS Key Laboratory of Quantum Information, University of Science and Technology of China, Hefei, 230026, People's Republic of China}
\affiliation{CAS Center For Excellence in Quantum Information and Quantum Physics, University of Science and Technology of China, Hefei, 230026, People's Republic of China}
\author{Yi Xie}
\email{These five authors contributed equally to this work.}
\affiliation{Department of Physics, College of Liberal Arts and Sciences, National University of
Defense Technology, Changsha 410073, Hunan, China}
\affiliation{Interdisciplinary Center for Quantum Information, National University of Defense
Technology, Changsha 410073, Hunan, China}
\author{Atul Singh Arora}
\email{These five authors contributed equally to this work.}
\affiliation{Université libre de Bruxelles, Belgium}
\affiliation{Institute of Quantum Information and Matter, Department of Computing and Mathematical Sciences, California Institute of Technology, USA}
\author{Ming-Zhong Ai}
\email{These five authors contributed equally to this work.}
\affiliation{CAS Key Laboratory of Quantum Information, University of Science and Technology of China, Hefei, 230026, People's Republic of China}
\affiliation{CAS Center For Excellence in Quantum Information and Quantum Physics, University of Science and Technology of China, Hefei, 230026, People's Republic of China}

\author{Kishor Bharti}
\email{These five authors contributed equally to this work.}
\affiliation{Centre for Quantum Technologies, National University of Singapore, Singapore}
\affiliation{Joint Center for Quantum Information and Computer Science and Joint Quantum Institute, NIST/University of Maryland, College Park, Maryland 20742, USA}

\author{Jie Zhang}
\affiliation{Department of Physics, College of Liberal Arts and Sciences, National University of
Defense Technology, Changsha 410073, Hunan, China}
\affiliation{Interdisciplinary Center for Quantum Information, National University of Defense
Technology, Changsha 410073, Hunan, China}
\author{Wei Wu}
\affiliation{Department of Physics, College of Liberal Arts and Sciences, National University of
Defense Technology, Changsha 410073, Hunan, China}
\affiliation{Interdisciplinary Center for Quantum Information, National University of Defense
Technology, Changsha 410073, Hunan, China}
\author{Ping-Xing Chen}
\affiliation{Department of Physics, College of Liberal Arts and Sciences, National University of
Defense Technology, Changsha 410073, Hunan, China}
\affiliation{Interdisciplinary Center for Quantum Information, National University of Defense
Technology, Changsha 410073, Hunan, China}
\author{Jin-Ming Cui}
\affiliation{CAS Key Laboratory of Quantum Information, University of Science and Technology of China, Hefei, 230026, People's Republic of China}
\affiliation{CAS Center For Excellence in Quantum Information and Quantum Physics, University of Science and Technology of China, Hefei, 230026, People's Republic of China}
\author{Bi-Heng Liu}
\affiliation{CAS Key Laboratory of Quantum Information, University of Science and Technology of China, Hefei, 230026, People's Republic of China}
\affiliation{CAS Center For Excellence in Quantum Information and Quantum Physics, University of Science and Technology of China, Hefei, 230026, People's Republic of China}
\author{Yun-Feng Huang}
\affiliation{CAS Key Laboratory of Quantum Information, University of Science and Technology of China, Hefei, 230026, People's Republic of China}
\affiliation{CAS Center For Excellence in Quantum Information and Quantum Physics, University of Science and Technology of China, Hefei, 230026, People's Republic of China}
\author{Chuan-Feng Li}
\affiliation{CAS Key Laboratory of Quantum Information, University of Science and Technology of China, Hefei, 230026, People's Republic of China}
\affiliation{CAS Center For Excellence in Quantum Information and Quantum Physics, University of Science and Technology of China, Hefei, 230026, People's Republic of China}
\author{Guang-Can Guo}
\affiliation{CAS Key Laboratory of Quantum Information, University of Science and Technology of China, Hefei, 230026, People's Republic of China}
\affiliation{CAS Center For Excellence in Quantum Information and Quantum Physics, University of Science and Technology of China, Hefei, 230026, People's Republic of China}

\author{Jérémie Roland}
\affiliation{Université libre de Bruxelles, Belgium}

\author{Ad\'{a}n Cabello}
\affiliation{Departamento de F\'{i}sica Aplicada II, Universidad de Sevilla, E-41012 Sevilla, Spain}
\affiliation{Instituto Carlos I de F\'{\i}sica Te\'orica y Computacional, Universidad de Sevilla, E-41012 Sevilla, Spain}

\author{Leong-Chuan Kwek}
\affiliation{Centre for Quantum Technologies, National University of Singapore, Singapore} \affiliation{MajuLab, CNRS-UNS-NUS-NTU International Joint Research Unit, Singapore UMI 3654, Singapore}
\affiliation{National Institute of Education, Nanyang Technological University, Singapore 637616, Singapore}

\begin{abstract}
Certifying individual quantum devices with minimal assumptions is crucial for the development of quantum technologies. Here, we investigate how to leverage single-system contextuality to realize self-testing. We develop a robust self-testing protocol based on the simplest contextuality witness for the simplest contextual quantum system, the Klyachko-Can-Binicioğlu-Shumovsky (KCBS) inequality for the qutrit. We establish a lower bound on the fidelity of the state and the measurements (to an ideal configuration) as a function of the value of the witness under a pragmatic assumption on the measurements we call the \emph{KCBS orthogonality} condition. We apply the method in an experiment with randomly chosen measurements on a single trapped $^{40}{\rm Ca}^+$ and near-perfect detection efficiency. The observed statistics allow us to self-test the system and provide the first experimental demonstration of quantum self-testing of a single system. Further, we quantify and report that deviations from our assumptions are minimal, an aspect previously overlooked by contextuality experiments. 

\end{abstract}

\maketitle

\section{Introduction}
We all believe that quantum devices typically have an edge over its classical analogues.  Yet, certification of quantum devices remains challenging and difficult with increased dimensionality.  
As system size increases, there is an exponential scaling in the Hilbert space, making certification hard \cite{Wang285,Gongeabg7812,PhysRevLett.106.130506}.   How do we confidently say that a quantum device is really a quantum device? Developing such confidence needs certification techniques that require minimal assumptions about their operation. 

Self-testing constitutes an effective method for providing assurance regarding the internal workings of quantum devices using on experimental data, but subject to appropriate assumptions \cite{vsupic2020self}. Mayers and Yao published a seminal study in 2004 that relied on Bell nonlocality to create the first self-testing methodology for a pair of non-communicating entangled quantum devices \cite{Yao_self}.  Indeed, the concept of self-testing can be traced back to earlier works of Tsirelson~\cite{tsirel1987quantum}, Summers-Werner~\cite{summers1987bell}, and Popescu-Rohrlich~\cite{popescu1992generic}.  Self-testing via Bell nonlocality has been demonstrated for all pure bipartite entangled states~\cite{coladangelo2017all},  for GHZ states~\cite{kaniewski2016analytic,PhysRevLett.127.230503}, and for all multipartite entangled states that admit a Schmidt decomposition~\cite{vsupic2018self}\footnote{Notably, multipartite states do not generally allow for Schmidt decomposition, although some generalisations of the Schmidt decomposition have been explored in the literature~\cite{acin2000generalized,kraus2010local}.}. The idea of self-testing has been further extended to a prepare-and-measure scenario~\cite{tavakoli2020self, farkas2019self} as well as contextuality~\cite{bharti2019robust, bharti2019local}, and steering~\cite{vsupic2016self, gheorghiu2015robustness,shrotriya2020self} cases. For a comprehensive review of self-testing, see reference \cite{vsupic2020self}.

Self-testing with Bell nonlocality is an effective technique for learning about the inner workings of a pair of space-like separated entangled devices and, as such, has been translated to experiments \cite{Wang285,liu2018device,zhang2018experimentally,zhang2019experimental,Bancal2021selftestingfinite,PhysRevA.99.032108}. However, for single quantum devices, self-testing based on Bell nonlocality forfeits its relevance and one requires local self-testing schemes for such cases. Since computation typically happens locally, a quantum computer is a canonical example of a single quantum device. This necessitates the development of self-testing protocols for single untrusted quantum devices.

In this work, we take an approach towards self-testing via non-contextuality inequalities---linear inequalities \cite{PhysRevLett.101.020403,Cabello08}, similar to Bell inequalities, the experimental violation of which \cite{kirchmair2009state,lapkiewicz2011experimental} can be used to witness non-classicality. Their advantage is precisely that they can be tested on individual (rather than composite) systems and thus bypasses the non-communication assumption. Instead, as we shortly describe, we make a pragmatic assumption on the measurements that we call \emph{KCBS orthogonality}. Further, for an experimentally relevant self-testing protocol, the scheme must be \emph{robust} against experimental noise. In particular, obtaining maximal violations should not be a necessity for the scheme to work. %

We provide the first \emph{robustness curve} (plotted and explained in \Figref{experiment_result}, \textsc{Right}) which allows one to self-test a single quantum system, based on contextuality and experimentally realise this self-test.
To obtain the robustness curve, we utilise semidefinite programming with moment matrices. The experiment (see \Figref{experiment_result}, \textsc{Left}) uses a single trapped $^{40}{\rm Ca}^+$ ion and is the first to simultaneously address (see Table~\ref{tab:comparison}) the detection efficiency loophole, randomness of basis selection, and to quantify the deviation from the assumption which permits self-testing (closely related to compatibility and sharpness of measurements; see \Subsecref{RelationToPriorWork}). In this case, the assumption is KCBS orthogonality. We find that the experimental data is consistent with our self-testing robustness curve which lower bounds the total fidelity of the measurements and state---a measure of closeness of the experimental setup to the ideal internal configuration.

\subsection{Overview and organisation}
We first outline our theoretical contributions (\Secref{TheoreticalFramework}) and subsequently delineate 
the experiment (\Secref{Experiment}). 
A self-testing scheme treats a quantum device like a black-box whose inner working cannot be accessed beyond giving it classical inputs and receiving classical outputs (\Subsecref{Theory_Background}). Every non-trivial self-test must make additional assumptions. Informally, our assumption which we termed KCBS orthogonality (\Assuref{KCBSexclusivity}, \Subsecref{Theory_robustKCBS}) is that the measurements are projective and satisfy orthogonality corresponding to a cyclic graph (vertices represent measurements and edges relations of orthogonality). %
We denote our self-test by $\mathcal{I}$ (\Eqref{self-test}) which is a linear expression in the input/output probabilities associated with the device. $\mathcal{I}$ can be made $0$ by a quantum device but every classical device must yield $\mathcal{I} \ge Q - C > 0$, where $Q$ and $C$ are constants, independent of the device once the number of measurements are fixed. It is known \cite{bharti2019robust} that the states and measurements associated with the device, henceforth referred to as a configuration, are unique for any device which yields $\mathcal{I}=0$; the uniqueness is up to a global isometry.  Such a configuration is termed the \emph{Klyachko-Can-Binicioğlu-Shumovsky (KCBS) configuration}. Note that, using global isometry, one can change any state $\ket{\phi}$ to any other state $\ket{\xi}$. This might prompt one to think that assuming the freedom  up to global isometries may render the problem trivial. This is not true. If there are two configurations that attain  $\mathcal{I}=0$,  we require the same global isometry to map states as well as measurement settings from one configuration to another. If one were to use different global isometries for mapping states and measurements, the problem would have been definitely trivial.  Further, for a single device, the notion of local isometry does not make sense. Consider a configuration for which $\mathcal{I}$ is close to zero. To quantify the closeness of it to the KCBS configuration, we define \emph{total fidelity}, $F$ (\Eqref{fidelity}). $F$ is six for the KCBS configuration.\footnote{The number $6$ appears because the KCBS configuration is specified by five measurements and a state; the fidelity to each is summed to obtain total fidelity.} We give a lower bound on $F$ as a function of $\mathcal{I}$. The idea is (\Subsecref{ProofOverview}) to define an isometry in terms of the measurement operators of the device itself, which maps a device's configuration to the KCBS configuration when $\mathcal{I}=0$. The expression for $F$ when $\mathcal{I}>0$ becomes an optimisation problem which can be relaxed to a hierarchy of semidefinite programs (SDPs). This in turn can be solved numerically to obtain lower bounds on $F$. The details of the proof appear in \Subsecref{lowerBoundSDPrelaxation} and those of the numerical solution in \Subsecref{numerics}. Obtaining the numerical solution was challenging. This is primarily because the description of the SDP we obtained is implicit. Therefore, we first find this explicit description by performing symbolic computation and subsequently, solve the resulting SDP. %
After performing %
symbolic computation for finding the objective and the constraints, we obtain an SDP in $18,336$ variables (Hermitian matrix of size $192\times 192$) with $16,859$ constraints. 
The size of the matrix is determined by the number and type of terms in the objective function. We emphasise that even the relevant representation of the objective function itself (in terms of variables we optimise over) already turns out to be quite long and would have been difficult to obtain without symbolic computation. 

We apply this self-test to an experimental setup based on a single trapped \calcium ion. The KCBS configuration can be realised using a qutrit and projective measurements. In the experiment, the three levels of the qutrit are formed by the ground state, and the first excited state which is further split by the presence of a magnetic field. The measurements are executed by rotating between the ground state and the excited state using $729$ nm lasers, performing a photon number measurement   (via fluorescence detection), followed by an inverse of the rotation (see \Figref{experiment_result}, \textsc{Left} and \Subsecref{setup_outline}). The initial state and the rotation angles can be chosen such that they correspond directly with the KCBS configuration (\Subsecref{KCBSconfig}). The actual experimental setup, of course, requires numerous details to be first addressed (\Subsecref{setup_outline}) for this correspondence to work. Using this setup, we perform two types of experiments (\Subsecref{theExp}), both of which are deviations from the KCBS configuration, constructed to ensure \Assuref{KCBSexclusivity} (discussed above) holds. First, we leave the measurements unchanged but progressively depolarise the initial state. Second, we change the measurement settings, parametrised by an angle $\theta$ and tilt the initial state. In effect, these experiments determine $\mathcal{I}$ and the robustness curve immediately lower bounds the total fidelity of the underlying configuration to the KCBS configuration. By performing tomography, we determine the actual fidelity to the KCBS configuration and find that the lower bound is indeed satisfied, as expected (\Figref{experiment_result}, \textsc{Right}; \Subsecref{the_data}).

\subsection{Relation to prior work\label{subsec:RelationToPriorWork}}
The first local self-testing scheme was presented in reference \cite{bharti2019robust}, where the scheme relied on violation of non-contextuality inequalities via qutrits.
Self-testing schemes for single device has since been a topic of  active investigation. The approach based on contextuality was further explored in~\cite{saha2020sum} and extended to arbitrarily high-dimensional quantum systems~\cite{bharti2019local}. The primary limitation of these approaches is that they do not yield the complete robustness curve; in particular, while~\cite{bharti2019robust} construct a robust self-testing scheme, this robustness is only shown up to multiplicative factors and is thus not directly applicable to real-world experimental scenarios.

Subsequently, in a breakthrough result, a self-test for a single quantum system was introduced in~\cite{metger2020self} which relied on cryptographic assumptions, as opposed to contextuality. The advantage of this approach over contextuality is that it is in the fully device independent setting, i.e., where one only accesses the input/output behaviour of the device. %
On the other hand, all contextuality based approaches reported so far make some assumptions, however mild, about the internal working of the device.\footnote{These assumptions may be dropped but then the robustness curve only holds against ``memoryless model'', i.e., the guarantee holds in a very restricted setting.} Therefore, theoretically cryptographic self-tests are appealing. However, in practice they are harder to implement since cryptographic functions require coherent control of a large number of qubits (the number of qubits determine the security). %
Based on a recent works (e.g.~\cite{kahanamoku2021classically}), one can estimate that such cryptographic functions would require a circuit with at least $1000$ qubits and a depth of $10000$ to be implemented.
This limitation is circumvented by our contextuality based robust self-test.

We now compare our experiment to previous experiments related to contextuality. To this end, we note that for demonstration of contextuality, it is conventional to adopt an operational language. In this work, however, we take a more pragmatic approach and work directly under the assumption that quantum mechanics governs the behaviour of our devices. Consequently, the terminology we use to characterise our experiment is slightly different from prior works. More precisely, the analogue of our KCBS orthogonality requirement, in the operational perspective may be thought of as compatibility and sharpness of measurements (see \Subsecref{noiseComment}). In our experiment, we simultaneously quantify and minimise divergence from the basic assumptions---perfect detection, random selection of measurements, compatibility and sharpness---which, to the best of our knowledge, makes it the most comprehensive contextuality experiment to be reported (see Table~\ref{tab:comparison}).

The construction of swap isometry in our paper is motivated by~\cite{bancal2015physical}. While the construction  in~\cite{bancal2015physical} works for Bell scenarios, our construction applies for single party setting.

\begin{figure*}[htbp]
	\centering
	\subfigimg[width=0.45\textwidth]{}{Fig2-1}
	\hspace{0.5in}
	\subfigimg[width=0.45\textwidth]{}{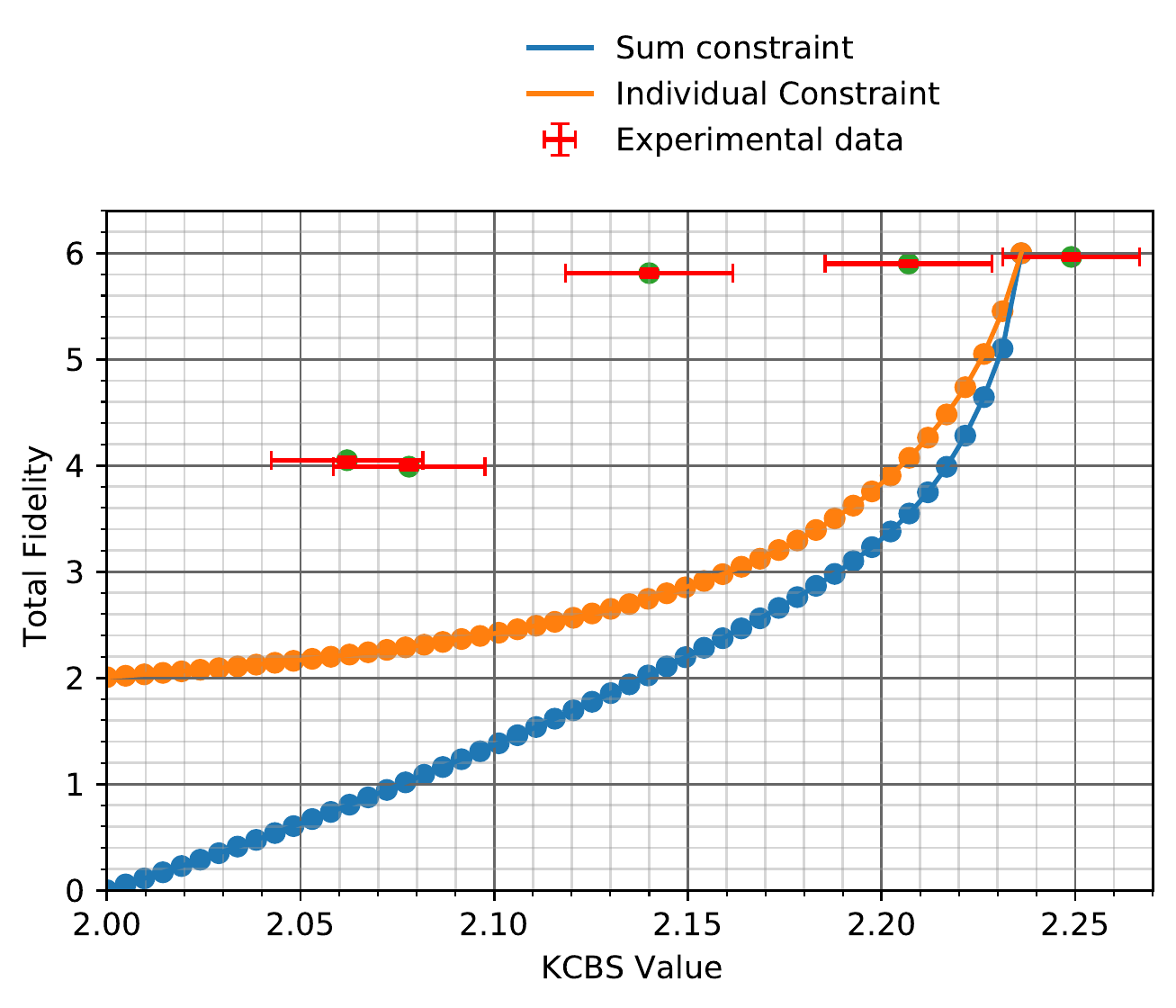}
	\caption{%
  \textsc{Left.} The $^{40}{\rm Ca}^{+}$ ion system and the measurement scheme. (a) The energy level diagram of $^{40}{\rm Ca}^{+}$ ion. (b) A schematic of the experimental setup. (c) The sequence of operations used in our experiment to measure $\Pr_i$ via $\Pr(10|ij)$, i.e., the probability of obtaining $(a,b)=(1,0)$ when $M_i$ and $M_j$ are measured. We choose $(i,j)\in E$ (edges of a five cycle graph) at random using a quantum random number generator. Each measurement $M_i$ is implemented by performing a rotation $U_i$, a fluorescence measurement (performed using a photo-multiplier tube), and undoing the rotation. For each measurement, the experiment is repeated over 10000 times.  \\
  \textsc{Right.} The $x$-axis represents $\sum_k p_k$ (which we call the KCBS value) while the $y$-axis represents the total fidelity, $F$ to the ideal KCBS configuration. Recall $\mathcal{I}=Q_n - \sum_k p_k$. The bottom curve (blue), obtained by solving a semidefinite program, lower bounds $F$ as a function of $\sum_k p_k$. The top curve (orange) is similarly obtained by assuming, in addition, that $p_1=p_2=\dots p_5$. The points represent various experiments which (nearly) satisfy \Assuref{KCBSexclusivity}. The $x$ coordinate of these points represents $\sum_k p_k$ while the $y$ coordinate is evaluated by performing tomography and computing the fidelity to the ideal KCBS configuration. We note that $\sum_k p_k$ exceeds $\sqrt{5}$ (the maximum quantum value). This is warranted because our experiments are nearly perfect, and even small deviations from the underlying assumptions cause the KCBS value to increase. It is worth noting that the right end of the error bar, however, is well within the $\sqrt{5}$ limit.} %
	\label{fig:experiment_result}
\end{figure*}

\begin{table*}
  \centering
  \begin{tabular}{ccccccc}
  \hline
  Year & System                                         & Self-test  & Compatibility & Sharp measurement & Random basis selection & Detection loophole                \\ \hline
  2011 & Photon~\cite{lapkiewicz2011experimental}        & $\times$  & quantified         & not checked          & $\times$                   & not addressed                               \\ 
  2013 & $\text{Yb}^{+}$ ion~\cite{um2013experimental}          & $\times$   & quantified          & not checked         & $\times$                   & closed                             \\ 
  2016 & Superconducting~\cite{jerger2016contextuality}  & $\times$  & quantified          & not checked         & $\times$                   & closed                              \\ 
  2017 & Photon~\cite{PhysRevLett.119.220403}            & $\times$  & quantified          & not checked         & $\times$                   & not addressed                                 \\ 
  2018 & $\text{Ca}^{+}$ ion~\cite{PhysRevA.98.050102}          & $\times$   & quantified          & not checked         & $\times$                   & closed                             \\ 
  2019 & Photon~\cite{PhysRevLett.122.080401}            & $\times$  & quantified          & not checked         & $\times$                   & not addressed                               \\ 
  2020 &  $\text{Ba}^{+}$ ion~\cite{PhysRevApplied.13.034077}   & $\times$   & quantified          & not checked         & $\times$                   & closed                             \\ 
  2022 & $\text{Ca}^+$ ion [this work]                                     & \checkmark  & quantified          & quantified            & \checkmark               & closed                              \\ \hline
  \end{tabular}
  \caption{A comparison of our experiment to previous experimental results on contextuality (see \Subsecref{noiseComment}).} %
  \label{tab:comparison}
  \end{table*}

\section{Theoretical framework \label{sec:TheoreticalFramework}}
\global\long\def\KCBS{\text{KCBS}}%
\global\long\def\tr{\text{tr}}
\global\long\def\swap{{\rm SWAP}}%

\subsection{Background \label{subsec:Theory_Background}}

Suppose we have a set of (independent) identical devices each of which
takes a string as input and produces a string as output. Informally,
we say the device is \emph{self-testing} if there exists some constraints
on the input-output behaviour of the device, which if satisfied, guarantees
the integrity of the device. More precisely, let $p_{i}$ denote the
probability of event $i$, where an event is a set of input output
behaviours. Let $\mathcal{I}:=\sum_{i}q_{i}p_{i}$ denote a linear
combination of these probabilities, where $q_{i}\in\mathbb{R}$. We
say that $(\mathcal{I},c)$, for some $c\in\mathbb{R}$, self-tests
the device if the following holds---if the device satisfies $\mathcal{I}=c$,
then the device is uniquely described as containing a specific quantum
state, the inputs corresponding to selecting specific measurements,
the outputs correspond to the action of the measurement on the aforementioned
quantum state, up to a global isometry. Clearly, no such self-test
can exist unless one makes additional assumptions.\footnote{This is simply because the device could classically store and reproduce
whatever statistics the self-test is required to satisfy.} Three assumptions have been explored in the literature---spatial
separation, computational assumptions and orthogonality of measurements.\footnote{First, physical separation between the components of the device which
may be termed the ``Bell setting''. A self-test in this setup harnesses
nonlocal correlations which can be produced using quantum devices
but not by classical devices. Second, computational hardness assumptions
(such as Learning With Errors), i.e., certain tasks are assumed hard
for the device being self-tested. A self-test in this setup harnesses
the ability of a quantum device to evaluate cryptographic functions
in superpositions and produce correlations which a classical device
cannot.} In this work, we focus on the latter which requires that the measurements
are repeatable and satisfy an orthogonality relation among them---for
instance if one obtains the outcome corresponding to $\Pi_{0}$, then
the outcome corresponding to $\Pi_{1}$ cannot occur (upon subsequent
measurement). This assumption may be motivated physically by considering
a form of tomography where the experimentalist is only required to
ensure that the measurements performed in the device are reliable
and by construction satisfy the orthogonality condition.\footnote{A self-test in this setup harnesses the ability of quantum devices
to produce ``non-contextual'' correlations. Bell nonlocality may
be seen as a special case of contextuality.}

\subsection{Robust KCBS self-testing \label{subsec:Theory_robustKCBS}}

We make the following assumption about the device we wish to self-test.
To describe it, we setup some notations. We denote the $i$th \emph{binary measurement} by $M_{i}:=(\Pi_{0|i},\Pi_{1|i})$, where $\Pi_{0|i}$
denotes the measurement operator corresponding to the zeroth output
and $\Pi_{1|i}$ denotes that corresponding to the first output. Being
measurement operators, they satisfy the probability conservation condition
$\Pi_{0|i}^{\dagger}\Pi_{0|i}+\Pi_{1|i}^{\dagger}\Pi_{1|i}=\mathbb{I}$. We say the binary measurement is \emph{repeatable} and \emph{Hermitian} if $\Pi_{b|i}^2= \Pi_{b|i}$ for $b\in{0,1}$ and $\Pi_{b|i}^{\dagger}=\Pi_{b|i}$, respectively.

\begin{assumption}
\label{assu:KCBSexclusivity}For an odd $n$, the quantum state in
the device is $\rho$ and the measurements $M_{1},M_{2},\ldots, M_{n}$
are repeatable and Hermitian binary measurements (as described above) satisfying $\Pi_{1|i}\cdot\Pi_{1|j}=0$
for $\{i,j\}\in\{\{1,2\},\{2,3\},\ldots,\{n-1,n\},\{n,1\}\}=:E$.
\end{assumption}

\begin{rem}
While one could use Naimark's theorem (according to which the statistics of any quantum measurement can also be obtained using a projective measurement, since any a positive-operator-valued measure can be seen as a projective-valued measure in a larger Hilbert space) to argue that restricting to projective measurements, this conclusion no longer holds for sequential measurements where post-measurement states become relevant. Further, Hermiticity too cannot be deduced by this argument. However, these requirements are not too strong and our experimental setup nearly satisfies the aforementioned assumption (see \Subsecref{noiseComment}).
\end{rem}

For a device satisfying \Assuref{KCBSexclusivity}, we consider the
self-test\footnote{This particular form is of interest because, informally, any classical
(i.e., non-contextual) model satisfying \Assuref{KCBSexclusivity}
cannot result in $\sum_{i=1}^{n}p_{i}>C_{n}$, where $C_{n}:=(n-1)/2$
while quantumly, for the KCBS configuration (see \Defref{idealKCBSconfig})
one obtains $\sum_{i=1}^{n}p_{i}=Q_{n}$, the highest possible~\cite{AQBTC13}. The self-test measures how close one is to the maximum
quantum value, $Q_{n}$.} $\mathcal{I}$
\begin{equation}
\mathcal{I}:=Q_{n}-\sum_{i=1}^{n}p_{i},
\label{eq:self-test}\end{equation}
where $p_{i}:=\tr(\Pi_{i}\rho)=\left\langle \Pi_{i}\right\rangle $
and $Q_{n}:=\frac{n\cos(\pi/n)}{1+\cos(\pi/n)}$ is the ``quantum
value''.\footnote{The maximum value of $\sum_i p_i$ that is \emph{classically} achievable (i.e., when all measurements commute) is $(n-1)/2$.} This quantum value is achieved by $\sum_{i}\left\langle \Pi_{i}\right\rangle $
for the \emph{KCBS configuration} ($\rho^{\KCBS},\{\Pi_{i}^{\KCBS}\})$,
where $\rho^{\KCBS}$ is a qutrit in a pure state and $\Pi_{i}^{\KCBS}$
are projectors (see \Defref{idealKCBSconfig} below), yielding $\mathcal{I}=0$.
It was shown in \cite{bharti2019robust} that any \emph{quantum realisation}
$(\rho,\{\Pi_{i}\})$ which satisfies \Assuref{KCBSexclusivity} and
yields $\mathcal{I}=0$, must be the same as the KCBS configuration,
up to a global isometry.\footnote{In fact, they also showed that if $\mathcal{I}=\epsilon$ is close
to zero, then the quantum realisation must be close to the KCBS configuration.
However, their analysis did not yield the exact function corresponding to~\eqref{fidelity}---they only
had an asymptotic bound (on the fidelity to the KCBS configuration) of the form $\mathcal{O}(\sqrt{\epsilon})$.}

For clarity, we restrict to $n=5$ henceforth but the arguments readily
generalise. Suppose the device corresponds to a quantum realisation
$(\rho,\{\Pi_{i}\})$ for which $\mathcal{I}=\epsilon$ is small but
not exactly zero. In this case, we derive a lower bound on the fidelity
of $(\rho,\{\Pi_{i}\})$ to $(\rho^{\KCBS},\{\Pi_{i}^{\KCBS}\})$
(up to a global isometry). More precisely, we lower bound the value
of the following function
\begin{equation}
F:=\min_{\rho,\{\Pi_{i}\}}\max_{V}\left[\sum_{i=1}^{5}\mathcal{F}(\tr_{\mathcal{H}}[V\Pi_{i}\rho\Pi_{i}V^{\dagger}],\Pi_{i}^{\KCBS}\rho^{\KCBS}\Pi_{i}^{\KCBS})+\mathcal{F}(\tr_{\mathcal{H}}[V\rho V^{\dagger}],\rho^{\KCBS})\right],\label{eq:fidelity}
\end{equation}
where $(\rho,\{\Pi_{i}\}_{i=1}^{5})$ is a quantum realisation %
of $\{p_{i}\}_{i=1}^{5}$ satisfying $\mathcal{I}=\epsilon$, $V$ is an isometry from $\mathcal{H}$
to $\mathcal{H}\otimes\mathcal{H}^{\KCBS}$, i.e., from the space on
which $\rho,\{\Pi_{i}\}_{i=1}^{5}$ are defined, to itself tensored
with the $3$-dimensional space where $\rho^{\KCBS},\{\Pi_{i}^{\KCBS}\}_{i=1}^{5}$
are defined, and $\mathcal{F}(\sigma,\tau):={\rm tr}\sqrt{\left|\sigma^{1/2}\tau^{1/2}\right|}$
is the fidelity of $\rho\ge0$ to $\sigma\ge0$. This lower bound can
be approximated by a sequence of semi-definite programs and can
be evaluated for any\footnote{Instead of the sum, $\sum_i p_i$, one could impose constraints on the values of $p_i$ individually. In fact, this is what we do in our analysis as it yields a better bound.} %
$\mathcal{I}=\epsilon$.

\subsection{Proof overview\label{subsec:ProofOverview}}
We briefly describe an algorithm for estimating $F$ and outline an argument which shows that this algorithm always yields a lower bound (on $F$).
First, we drop the maximisation
over $V$ in \Eqref{fidelity} and replace it with a particular isometry
$V$ which is expressed in terms of $\rho,\{\Pi_{i}\}_{i=1}^{5}$. %
Then, as we shall see, the expression for the fidelity appears as
a sum of terms of the following form. Let $w$ be a \emph{word} created
from the \emph{letters} $\{\mathbb{I},\Pi_{1},\Pi_{2}\dots\Pi_{5},\hat{P}\}$
with $\hat{P}^{\dagger}\hat{P}=\mathbb{I}$, $\Pi_{i}^{2}=\Pi_{i}$
and $\Pi_{i}\Pi_{j}=0$ if $(i,j)\in E$, i.e., when $i,j$ are
exclusive.\footnote{We introduced $\hat{P}$ for completeness; its role is explained later.}
The fidelity is a linear combination of these words, i.e., $F=\min_{\left\{ \left\langle w\right\rangle \right\} }\sum_{w}\alpha_{w}\left\langle w\right\rangle$,
where $\tr[w\rho]=:\left\langle w\right\rangle $, subject to the
constraint that $\{\left\langle w\right\rangle \}_{w}$ corresponds
to a quantum realisation. The advantage of casting the problem in
this form is that one can now relax the problem to a sequence of semi-definite
programs.\footnote{I.e., construct an NPA-like hierarchy~\cite{NPA}.}
The idea is simple to state. Treat $\left\{ \left\langle w\right\rangle \right\} _{w}$
as a vector. Denote by $Q$ the set of all such vectors which correspond
to a quantum realisation (of $\left\{ p_{i}\right\} _{i=1}^{5}$).
It turns out that one can impose constraints on words with $k$ letters,
for instance. Under these constraints, denote by $Q_{k}$ the set
that is obtained. Note that $Q_{k}\supseteq Q$ for it may contain
vectors which don't correspond to the quantum realisation. In fact,
$Q_{k}$ can be characterised using semi-definite programming constraints
(which in turn means they are efficiently computable). Intuitively,
one expects that $\lim_{k\to\infty}Q_{k}=Q$. Further, it is clear
that $F=\min_{\{\left\langle w\right\rangle \}_{w}\in Q}\sum_{w}\alpha_{w}\left\langle w\right\rangle \ge\min_{\left\{ \left\langle w\right\rangle \right\} _{w}\in Q_{k}}\sum_{w}\alpha_{w}\left\langle w\right\rangle $
as we are minimising over a larger set on the right hand side.

\subsection{Proof | Lower bound using an SDP relaxation \label{subsec:lowerBoundSDPrelaxation}}

We begin with defining the ideal KCBS configuration, referred to above.
\begin{defn}[An ideal KCBS configuration]
\label{def:idealKCBSconfig}Consider a three-dimensional Hilbert
space spanned by the basis $\{\left|0\right\rangle ,\left|1\right\rangle ,\left|2\right\rangle \}$.
Let
\begin{equation}
\left|u_{l}\right\rangle :=\cos\theta\left|0\right\rangle +\sin\theta\sin\phi_{l}\left|1\right\rangle +\sin\theta\cos\phi_{l}\left|2\right\rangle,
\end{equation}
 where $\phi_{l}:=l\pi(n-1)/n$ for $1\le l\le n$ and $\cos^2{\theta} = \frac{\cos{(\pi/n)}}{1+ \cos{(\pi/n)}}$. Define
\begin{align*}
\left|\psi^{\KCBS}\right\rangle  & :=\left|0\right\rangle, \\
\Pi_{i}^{\KCBS} & :=\left|u_{i}\right\rangle \left\langle u_{i}\right|.
\end{align*}
\end{defn}
\global\long\def\swap{{\rm SWAP}} 
For concreteness, we first consider a unitary $U_{\swap}$ instead
of an isometry, which acts on two spaces $\mathcal{A}$ and $\mathcal{A}'$,
i.e., $U_{\swap}:\mathcal{A}\otimes\mathcal{A}'\to\mathcal{A}\otimes\mathcal{A}'$. 
The space $\mathcal{A'}$ is three-dimensional and $\mathcal{A}$ is an arbitrary Hilbert space. Informally, we want to construct the unitary $U_{\swap}$ to be such that it takes a realisation in $\mathcal{A}$ and maps it to a realisation in $\mathcal{A}'$ which has a large overlap with the KCBS configuration.
To this end, we first assume that $\mathcal{A}$ is three-dimensional. In particular, suppose that the $\mathcal{A}$ register is in the
state
\begin{equation}
\sigma\in\underbrace{\{\left|\psi^{\KCBS}\right\rangle \left\langle \psi^{\KCBS}\right|\}\cup
\{ \Pi_{i}^{\KCBS} \ket{\psi^{\KCBS}} \bra{\psi^{\KCBS}} \Pi_i^{\KCBS}  \}   _{i=1}^{5}}_{:=\mathcal{S}^{\KCBS}}.
\end{equation}
Then, at the very least, we want $U_{\swap}$ to map $\sigma_{\mathcal{A}}\otimes\left|0\right\rangle \left\langle 0\right|_{\mathcal{A}'}$
to $\left|0\right\rangle \left\langle 0\right|_{\mathcal{A}}\otimes\sigma_{\mathcal{A}'}$
(see \Figref{IllustrationSWAP}). Our strategy is to construct $U_{\swap}$
in this seemingly trivial case and then express it in terms of the
state and measurement operators on $\mathcal{A}$. The rationale is
that by construction, $U_{\swap}$ will work for the ideal case and
therefore should also work for cases close to ideal. This should become
clear momentarily. Note that any circuit that swaps two qutrits should
let us achieve our simplified goal (because all elements of $S^{\KCBS}$
are defined on a three dimensional Hilbert space). One possible qutrit
swapping unitary/circuit (a special case of the general qudit swapping
unitary/circuit defined in~\cite{bancal2015physical})
may be defined
as $S_{\swap}'':=TUVU$, where
\begin{align}
T & :=\mathbb{I}_{\mathcal{A}}\otimes\sum_{k=0}^{2}\left|-k\right\rangle \left\langle k\right|_{\mathcal{A}'},\nonumber \\
U & :=\sum_{k=0}^{2}P_{\mathcal{A}}^{k}\otimes\left|k\right\rangle \left\langle k\right|_{\mathcal{A}'},\nonumber \\
V & :=\sum_{k=0}^{2}\left|\bar{k}\right\rangle \left\langle \bar{k}\right|_{\mathcal{A}}\otimes P_{\mathcal{A}'}^{-k},\label{eq:TUV}
\end{align}
where $P:=\sum_{i=0}^{2}\left|\overline{k+1}\right\rangle \left\langle \bar{k}\right|$
is a translation operator, the arithmetic operations are modulo 3 and $\{\left|\bar{0}\right\rangle ,\left|\bar{1}\right\rangle ,\left|\bar{2}\right\rangle \}$
is a basis for the qutrit space $\mathcal{A}$. %
We omit the proof that $S''_{\swap}$ indeed performs a swap operation (for a proof see~\cite{bancal2015physical}). To generalise this idea and to construct an isometry, we relax the assumption that
$\mathcal{A}$ is a three dimensional Hilbert space. We re-express/replace
the operations in $T,U,V$ which act on the $\mathcal{A}$ space by
linear combinations of monomials in $\{\Pi_{i}\}_{i=1}^{5}$. We obtain
the coefficients used in these linear combinations by assuming the
space $\mathcal{A}'$ is three dimensional. The idea is simply that this map reduces
to a swap operation when we re-impose the assumptions and for cases
close to it, we expect it to behave appropriately. %
We describe
this procedure more precisely below.
\global\long\def\cross{\text{cross}}%

\begin{lyxalgorithm}[Constructing an isometry]
 Let\label{alg:ConstructingAnIsometry}
\begin{itemize}
\item $\mathcal{A}'$ be a three-dimensional Hilbert space spanned by an
orthonormal basis $\left\{ \left|0\right\rangle _{\mathcal{A}'},\left|1\right\rangle _{\mathcal{A}'},\left|2\right\rangle _{\mathcal{A}'}\right\} $,
\item $\rho^{\KCBS},\{\Pi_{i}^{\KCBS}\}_{i=1}^{5}$ be an ideal KCBS configuration
(see \Defref{idealKCBSconfig}) on $\mathcal{A}'$,
\item $\mathcal{A}$ be a Hilbert space with dimension at least $3$ containing
orthonormal vectors $\left\{ \left|\bar{0}\right\rangle _{\mathcal{A}},\left|\bar{1}\right\rangle _{\mathcal{A}},\left|\bar{2}\right\rangle _{\mathcal{A}}\right\}$,
\item $\rho,\{\Pi_{i}\}_{i=1}^{5}$ be an arbitrary quantum realisation defined on $\mathcal{A}$.
\end{itemize}
Define $T,U,V$ as in \Eqref{TUV} and let $S'_{\swap}:=TUVU$ with
the following changes. Let
\begin{equation}
\mathcal{W}^{\KCBS}:=\{\{\Pi_{i}^{\KCBS}\}_{i=1}^{5},\{\Pi_{i}^{\KCBS}\Pi_{j}^{\KCBS}\}_{i,j=1}^{5},\dots\}
\end{equation}
 and
\begin{equation}
\mathcal{W}:=\{\{\Pi_{i}\}_{i=1}^{5},\{\Pi_{i}\Pi_{j}\}_{i,j=1}^{5}\dots\}
\end{equation}
 be the set of ``words'' formed by the KCBS projectors and those
of the arbitrary quantum realisation, respectively.
\begin{enumerate}
\item Translation operator:
\begin{enumerate}
\item Express $P_{\mathcal{A}'}$ as a linear combination of elements in
$\mathcal{W}$, i.e., $P_{\mathcal{A}'}=\sum_{l^{\KCBS}\in\mathcal{W}^{\KCBS}}\alpha_{l}l^{\KCBS}$.
\item Define $P_{\mathcal{A}}:=\sum_{l\in\mathcal{W}}\alpha_{l}l$.
\end{enumerate}
\item Basis projectors: Formally replace, in $V$, the operators
\begin{enumerate}
\item $\left|\bar{0}\right\rangle \left\langle \bar{0}\right|_{\mathcal{A}}$
by $\Pi_{1}$,
\item $\left|\bar{1}\right\rangle \left\langle \bar{1}\right|_{\mathcal{A}}$
by $\Pi_{2}$, and
\item $\left|\bar{2}\right\rangle \left\langle \bar{2}\right|_{\mathcal{A}}$
by $(\mathbb{I}-\Pi_{1})(\mathbb{I}-\Pi_{2})$.
\end{enumerate}
Thus, $V$ now becomes $\Pi_{1}\otimes\mathbb{I}_{\mathcal{A}'}+\Pi_{2}\otimes P_{\mathcal{A}'}^{-1}+(\mathbb{I}-\Pi_{1})(\mathbb{I}-\Pi_{2})\otimes P_{\mathcal{A}'}^{-2}$.
\end{enumerate}
\end{lyxalgorithm}

We found an explicit linear combination for step 1 (a) of \Algref{ConstructingAnIsometry}.\footnote{Using prior results, we expect one can prove the existence of such
a linear combination, but we do not pursue this here.} Thus, the algorithm always succeeds at constructing $S'_{\swap}$.
We must show that $S'_{\swap}$ is in fact an isometry. This is important
because of the following reason. Recall that our objective was to
lower bound \Eqref{fidelity}. To this end, we said we drop the maximisation
over all possible $V$s, (for a given quantum realisation $(\rho,\left\{ \Pi_{i}\right\} _{i=1}^{5})$)
and instead insert a specific isometry $S'_{\swap}$ (which is a function
of $(\rho,\{\Pi_{i}\}_{i=1}^{5})$). Note that this argument for lower
bounding \Eqref{fidelity} breaks if $S'_{\swap}$ is not an isometry.
In fact, $P_{\mathcal{A}}$ as produced by the algorithm is not necessarily
unitary (viz. $P_{\mathcal{A}}^{\dagger}P_{\mathcal{A}}=\mathbb{I}_{\mathcal{A}}$
may not hold). To address this, we use the \emph{localising matrix}
technique introduced in ~\cite{pironio2010convergent}. Let $\hat{P}_{\mathcal{A}}$ be
unitary matrix satisfying $\hat{P}P\ge0$, where we dropped the subscript
for clarity. Consider the case where $P$ is not unitary. In that
case, one can use polar decomposition to write $P=\left|P\right|U$
(not to be confused with the $U$ above; where $\left|P\right|\ge0$
and $U^{\dagger}U=\mathbb{I}$) so choosing $\hat{P}=U^{\dagger}$
satisfies $\hat{P}P\ge0$. Thus for each $P$, the constraint can
be satisfied. Consider the other case, i.e., where $P=U$ is unitary.
Then,\footnote{%
To see this, write the
polar decomposition of $\hat{P}=\left|\hat{P}\right|E$ (where $\left|\hat{P}\right|\ge0$
and $E^{\dagger}E=\mathbb{I}$). Then, $\hat{P}U=\left|\hat{P}\right|EU$,
which is a polar decomposition of $\hat{P}U$. The polar decomposition
of a positive semi-definite matrix $M$ always has the form $M.\mathbb{I}$.
Using $M=\hat{P}U$, and identifying $EU$ with $\mathbb{I}$, we
have $E=U^{\dagger}$.} $\hat{P}=U^{\dagger}$. Thus, in the ideal case, we recover the same
unitary and for the case close to ideal, we are guaranteed that there
is some solution (which we expect should also work reasonably). Combining
these, we can construct $S_{\swap}$, which is an isometry.
\begin{lem}[$S_{\swap}$ is indeed an isometry]
 Let $S'_{\swap}$ be the map produced by \Algref{ConstructingAnIsometry}
and define $S_{\swap}$ to be $S'_{\swap}$ with $P_{\mathcal{A}}$
replaced by $\hat{P}_{\mathcal{A}}$. %
Then $S_{\swap}$ is an isometry if
the following conditions hold \label{lem:UswapIsAnIsometry}
\begin{align}
\hat{P}_{\mathcal{A}}^{\dagger}\hat{P}_{\mathcal{A}} & =\mathbb{I}_{\mathcal{A}},\\
\hat{P}_{\mathcal{A}}P_{\mathcal{A}} & \ge0.
\end{align}
\end{lem}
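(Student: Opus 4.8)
The plan is to prove the isometry property factor-by-factor, exploiting that $S_{\swap}=TUVU$ is a product and that a product of isometries is again an isometry: if $A^{\dagger}A=\mathbb{I}$ and $B^{\dagger}B=\mathbb{I}$ then $(AB)^{\dagger}(AB)=B^{\dagger}A^{\dagger}AB=B^{\dagger}B=\mathbb{I}$. It therefore suffices to check that each of $T$, $V$ and the modified $U$ satisfies $X^{\dagger}X=\mathbb{I}_{\mathcal{A}\otimes\mathcal{A}'}$, after which $S_{\swap}^{\dagger}S_{\swap}=U^{\dagger}V^{\dagger}U^{\dagger}T^{\dagger}TUVU$ collapses from the inside out to $\mathbb{I}$. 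Note that $P_{\mathcal{A}}$ appears only inside $U$, so only $U$ is affected by the replacement $P_{\mathcal{A}}\mapsto\hat{P}_{\mathcal{A}}$.

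First I would dispose of $T$ and $V$, which are in fact unitary and independent of $\hat{P}_{\mathcal{A}}$. The operator $T$ acts as $\mathbb{I}$ on $\mathcal{A}$ and as the permutation $\ket{k}\mapsto\ket{-k}$ (mod $3$) on the qutrit register $\mathcal{A}'$, so $T^{\dagger}T=\mathbb{I}$ is immediate. For $V$, the key observation is that \Assuref{KCBSexclusivity} makes $\Pi_{1}:=\Pi_{1|1}$ and $\Pi_{2}:=\Pi_{1|2}$ orthogonal Hermitian projectors: repeatability and Hermiticity give $\Pi_{i}^{2}=\Pi_{i}=\Pi_{i}^{\dagger}$, while $\{1,2\}\in E$ gives $\Pi_{1}\Pi_{2}=0$. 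Hence $(\mathbb{I}-\Pi_{1})(\mathbb{I}-\Pi_{2})=\mathbb{I}-\Pi_{1}-\Pi_{2}=:R$ is itself a projector, and $\{\Pi_{1},\Pi_{2},R\}$ is a resolution of the identity into mutually orthogonal projectors. Writing $V=\Pi_{1}\otimes\mathbb{I}+\Pi_{2}\otimes P_{\mathcal{A}'}^{-1}+R\otimes P_{\mathcal{A}'}^{-2}$, all cross terms in $V^{\dagger}V$ vanish by orthogonality, and the diagonal terms reduce (using unitarity of $P_{\mathcal{A}'}$ on the qutrit space) to $(\Pi_{1}+\Pi_{2}+R)\otimes\mathbb{I}=\mathbb{I}$.

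The substantive step is $U$, and this is where the first hypothesis enters. After the replacement, $U=\sum_{k=0}^{2}\hat{P}_{\mathcal{A}}^{\,k}\otimes\ket{k}\bra{k}_{\mathcal{A}'}$, and orthogonality of the $\ket{k}\bra{k}_{\mathcal{A}'}$ kills all cross terms, leaving $U^{\dagger}U=\sum_{k=0}^{2}(\hat{P}_{\mathcal{A}}^{\dagger})^{k}\hat{P}_{\mathcal{A}}^{\,k}\otimes\ket{k}\bra{k}_{\mathcal{A}'}$. I would then show by induction that $(\hat{P}_{\mathcal{A}}^{\dagger})^{k}\hat{P}_{\mathcal{A}}^{\,k}=\mathbb{I}_{\mathcal{A}}$ for $k=0,1,2$, peeling off one central factor $\hat{P}_{\mathcal{A}}^{\dagger}\hat{P}_{\mathcal{A}}=\mathbb{I}_{\mathcal{A}}$ at a time; summing over $k$ and using $\sum_{k}\ket{k}\bra{k}_{\mathcal{A}'}=\mathbb{I}_{\mathcal{A}'}$ yields $U^{\dagger}U=\mathbb{I}$. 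I expect this to be the main conceptual point: one must use \emph{only} the isometry condition $\hat{P}_{\mathcal{A}}^{\dagger}\hat{P}_{\mathcal{A}}=\mathbb{I}$ and never $\hat{P}_{\mathcal{A}}\hat{P}_{\mathcal{A}}^{\dagger}=\mathbb{I}$, since $\mathcal{A}$ may be infinite-dimensional and $\hat{P}_{\mathcal{A}}$ need not be a genuine unitary there. This is precisely why the lemma asks for an isometry rather than a unitary.

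Finally I would assemble the three pieces to conclude $S_{\swap}^{\dagger}S_{\swap}=\mathbb{I}_{\mathcal{A}\otimes\mathcal{A}'}$, so that $S_{\swap}$ is an isometry. I would also remark on the role of the second hypothesis $\hat{P}_{\mathcal{A}}P_{\mathcal{A}}\ge0$: it is not actually invoked in establishing $S_{\swap}^{\dagger}S_{\swap}=\mathbb{I}$, but rather selects, among all isometries $\hat{P}_{\mathcal{A}}$, the one that agrees with the unitary part of $P_{\mathcal{A}}$ via polar decomposition (as discussed just above the lemma). This is what guarantees that $S_{\swap}$ reduces to the genuine swap $S''_{\swap}$ in the ideal case, and hence that inserting it in place of the maximisation over $V$ in \Eqref{fidelity} produces a valid and non-vacuous lower bound.
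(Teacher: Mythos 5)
Your proposal is correct and takes essentially the same approach as the paper: factor $S_{\rm SWAP}=TUVU$ and check $T^{\dagger}T=U^{\dagger}U=V^{\dagger}V=\mathbb{I}$, using $\hat{P}_{\mathcal{A}}^{\dagger}\hat{P}_{\mathcal{A}}=\mathbb{I}_{\mathcal{A}}$ for $U$ and the relations $\Pi_{i}^{2}=\Pi_{i}=\Pi_{i}^{\dagger}$, $\Pi_{1}\Pi_{2}=0$, together with unitarity of $P_{\mathcal{A}'}$, for $V$; your version merely spells out details the paper leaves implicit. Your closing remark that the hypothesis $\hat{P}_{\mathcal{A}}P_{\mathcal{A}}\ge 0$ is never invoked in establishing $S_{\rm SWAP}^{\dagger}S_{\rm SWAP}=\mathbb{I}$, serving only to tie $\hat{P}_{\mathcal{A}}$ to $P_{\mathcal{A}}$ via the polar decomposition, is likewise consistent with the paper's proof, which uses only the first condition.
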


\begin{proof}
It suffices to show that $\left\langle \psi\right|_{\mathcal{A}'\mathcal{A}}S_{\swap}^{\dagger}S_{\swap}\left|\psi\right\rangle _{\mathcal{A}'\mathcal{A}}=1$
for all normalised $\left|\psi\right\rangle _{\mathcal{A}'\mathcal{A}}$.
We express $S_{\swap}=TUVU$, where $T$ is as in \Eqref{TUV}, $U:=\sum_{k=0}^{2}\hat{P}_{\mathcal{A}}^{k}\otimes\left|\bar{k}\right\rangle \left\langle \bar{k}\right|_{\mathcal{A}'}$
and $V:=\Pi_{1}\otimes\mathbb{I}_{\mathcal{A}'}+\Pi_{2}\otimes P_{\mathcal{A}'}^{-1}+(\mathbb{I}-\Pi_{1})(\mathbb{I}-\Pi_{2})\otimes P_{\mathcal{A}'}^{-2}$.
Observe that $T^{\dagger}T=\mathbb{I}_{\mathcal{A}\mathcal{A}'}=U^{\dagger}U$
since $\hat{P}_{\mathcal{A}}^{\dagger}\hat{P}_{\mathcal{A}}=\mathbb{I}_{\mathcal{A}}$.
Further, we have that
\begin{align}
V^{\dagger}V & =\Pi_{1}\otimes\mathbb{I}_{\mathcal{A}'}+\Pi_{2}\otimes\mathbb{I}_{\mathcal{A}'}+(\mathbb{I}-\Pi_{1})(\mathbb{I}-\Pi_{2})\otimes\mathbb{I}_{\mathcal{A}'} & \because\Pi_{1}\Pi_{2}=0,P_{\mathcal{A}'}^{\dagger}P_{\mathcal{A}'}=\mathbb{I}_{\mathcal{A}'}\\
 & =\mathbb{I}_{\mathcal{A}\mathcal{A}'}.
\end{align}
Hence, $S_{\swap}^{\dagger}S_{\swap}=\mathbb{I}_{\mathcal{A}\mathcal{A}'}$
establishing that $S_{\swap}$ is in fact unitary and thus also an
isometry.
\end{proof}
\begin{figure}
\begin{centering}
\includegraphics[width=0.8\paperwidth]{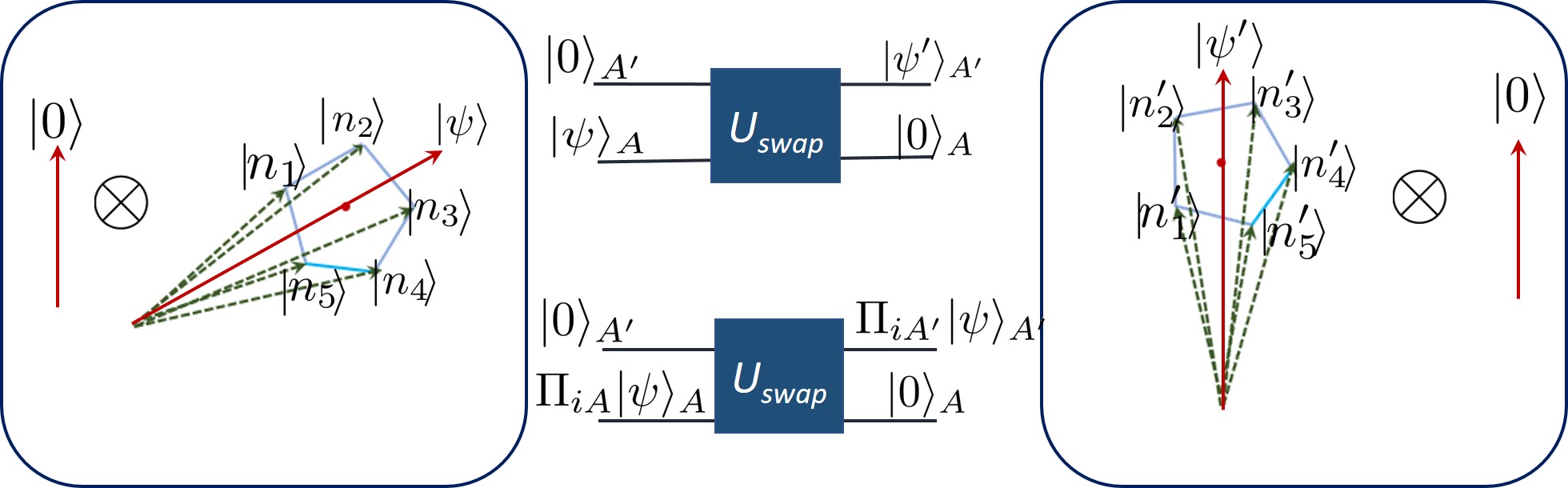}
\par\end{centering}
\caption{Illustration of how $U_{\text{SWAP}}$ acts on a quantum realisation in three dimensions. For the general case, this is expressed in terms of the measurement operators specifying the quantum realisation itself. It serves as a key ingredient in obtaining a bound on the total fidelity. %
\label{fig:IllustrationSWAP} }
\end{figure}
We can now combine all the pieces to write the final optimisation
problem we solve. We use $\mathcal{G}(\{a_{i}\})$ to denote the set
of all letters
\begin{lyxalgorithm}[The SDP for lower bounding the fidelity of a given realisation with
the ideal KCBS realisation]
 The algorithm proceeds in two parts.\label{alg:The-algorithm} \\
\textbf{Notation:} Let
\begin{itemize}
\item $\mathcal{A}$ represent an arbitrary %
Hilbert space and $\mathcal{A}'$
represent a three dimensional Hilbert space,
\item $\bar{\mathcal{W}}$ be the set of ``words'' constructed using $\{\Pi_{i}\}_{i=1}^{5}$,
$\hat{P}$ and $\hat{P}^{\dagger}$, i.e., $\bar{\mathcal{W}}=\mathcal{G}(\{\Pi_{i}\}_{i=1}^{5},\hat{P},\hat{P}^{\dagger})$,
and
\item define $\left\langle w\right\rangle :=\tr_{\mathcal{A}}(\rho w)$
and assume\footnote{Without loss of generality, one can purify a state by increasing the
dimension and have the operators leave that space unaffected (for
any given quantum realisation).} $\rho=\left|\psi\right\rangle \left\langle \psi\right|$ so that
$\left\langle w\right\rangle =\left\langle \psi\right|w\left|\psi\right\rangle $
for all $w\in\mathcal{W}$.
\end{itemize}
\textbf{Input:}
\begin{itemize}
\item (Implicit) $\Pi_{i}.\Pi_{j}=0$ for all $i,j\in E(G)$, where $G$
is a five cycle graph and $E$ are its edges, indexed $[1,2,\dots5]$.
\item Observed statistics:\footnote{To obtain a better bound, one can add more fine-grained statistics
as well such as the values of $p_{i}$ individually for each $i$.} $\sum_{i}p_{i}=\sum_{i}\left\langle \Pi_{i}\right\rangle =c$.
\end{itemize}
\textbf{Evaluation | Part 1.}
\begin{itemize}
\item Evaluate $S_{\swap}$ as described in \Lemref{UswapIsAnIsometry}
using \Algref{ConstructingAnIsometry}.
\item Define the objective function $f$ as
\begin{align}
f\left(\rho,\left\{ \Pi_{i}\right\} _{i=1}^{5}\right) & =\sum_{i=1}^{5}\mathcal{F}\left(\tr_{\mathcal{A}}\left[S_{\swap}\left(\Pi_{i}\rho_{\mathcal{A}}\Pi_{i}\otimes\left|0\right\rangle \left\langle 0\right|_{\mathcal{A}'}\right)S_{\swap}^{\dagger}\right],\Pi_{i}^{\KCBS}\rho_{\mathcal{A}'}^{\KCBS}\Pi_{i}^{\KCBS}\right)\\
 & \quad+\mathcal{F}\left(\tr_{\mathcal{A}}\left[S_{\swap}\left(\rho_{\mathcal{A}}\otimes\left|0\right\rangle \left\langle 0\right|_{\mathcal{A}'}\right)S_{\swap}^{\dagger}\right],\rho_{\mathcal{A}'}^{\KCBS}\right)
\end{align}
and evaluate the coefficients $f_{w}$ so that\footnote{This was done using symbolic computation.}
\begin{equation}
f=\sum_{w\in\mathcal{W}}f_{w}\left\langle w\right\rangle .
\end{equation}
\item Again, from \Algref{ConstructingAnIsometry}, evaluate $P_{\mathcal{A}}$
as a linear combination of $\left\langle w\right\rangle $.
\end{itemize}
\textbf{Evaluation | Part 2.} Solve the following SDP:
\begin{align*}
F_{\text{noiseless}}= & \min_{\{\left\langle w\right\rangle \}_{w\in\mathcal{W}}}\sum_{w\in\mathcal{W}}f_{w}\left\langle w\right\rangle \\
{\rm s.t.}\quad & \Gamma^{(k)}(\mathbb{I})\ge0 & \because\text{ all gram matrices are }\ge0\\
 & \Gamma^{(k)}(\mathbb{I})_{v,w}=\Gamma^{(k)}(\mathbb{I})_{v',w'} & \text{if }v^{\dagger}w=v^{\prime\dagger}w'\\
 & \Gamma^{(k)}(\hat{P}_{\mathcal{A}}P_{\mathcal{A}})\ge0 & \text{(localising matrix)}\\
 & \Gamma^{(k)}(\hat{P}_{\mathcal{A}}P_{\mathcal{A}})_{v,w}=\Gamma^{(k)}(\hat{P}_{\mathcal{A}}P_{\mathcal{A}})_{v',w'} & \text{if }v^{\dagger}\hat{P}_{\mathcal{A}}P_{\mathcal{A}}w=v^{\prime\dagger}\hat{P}_{\mathcal{A}}P_{\mathcal{A}}w'\\
 & \sum_{i=1}^{5}\left\langle \Pi_{i}\right\rangle =c & \text{(observed statistic)}
\end{align*}
where $\Gamma^{(k)}(X)$ is a matrix which is
\begin{itemize}
\item indexed by letters $w$,
\item whose matrix elements are given by $\Gamma^{(k)}(X)_{v,w}=\left\langle \psi\right|v^{\dagger}Xw\left|\psi\right\rangle $,
\item where $k$ defines the maximum number of letters that appear in the
words $w$ which index the matrix $\Gamma^{(k)}$, and,
\end{itemize}
where in the first two equality constraints, we use the following
relations:
\begin{itemize}
\item $\Pi_{i}.\Pi_{j}=0$ for all $i,j\in E(G)$ and
\item $\hat{P}_{\mathcal{A}}^{\dagger}\hat{P}_{\mathcal{A}}=\hat{P}_{\mathcal{A}}\hat{P}_{\mathcal{A}}^{\dagger}=\mathbb{I}_{\mathcal{A}}$.
\end{itemize}
\textbf{Output:} $F_{\text{noiseless}}(c)$.
\end{lyxalgorithm}

\begin{rem}
Note that $\hat{P}P\ge0\implies\Gamma^{(k)}(\hat{P}P)\ge0$. This
follows readily by letting $A^{\dagger}A=\hat{P}P$ for some $A$
(which must exist for any positive semi-definite matrix; one can use
spectral decomposition). Then $\Gamma^{(k)}(A^{\dagger}A)$ is a Gram
matrix and thus $\ge0$.
\end{rem}

\begin{rem}
We solved this SDP with $\left\langle w\right\rangle$ restricted to being real, but it has been shown that this is without loss of generality (see, e.g., §V.A in~\cite{bancal2015physical}). %
\end{rem}

We thus have an algorithm which can calculate the required lower bound
on fidelity, given the observed value of the KCBS operator.

\subsection{Numerics\label{subsec:numerics}}
As previously stated, obtaining the numerical solution was quite involved, owing to the fact that the description of the semidefinite program we obtain is implicit. Consequently, one  first needs to find this explicit description through symbolic computation and then solve the resulting semidefinite program. After performing the symbolic computation to find the objective and constraints, we solve an SDP over a $192\times192$ matrix with $16859$ constraints. The terms in the objective function must appear in the SDP matrix in order for us to impose appropriate constraints. This is because the optimisation program would otherwise be undefined. The unusually large size of the SDP matrices in our case reflects the complexity of the objective function in our case, which would have been difficult to obtain without symbolic computation. %
To capture all the terms that appear in the objective, we need words with three letters. Further, we require $769$ localising matrix constraints. 

We used Jupyter notebooks (based on python) as our programming environment. We used sympy \cite{sympy} for symbolic computations and cvxpy \cite{diamond2016cvxpy,agrawal2018rewriting} for translating the resulting SDP into an instance which MOSEK (an SDP solver) could solve. Our implementation is available on GitHub \cite{swapKCBS}.

\section{Experiment \label{sec:Experiment}}
We construct a physical setup which nearly satisfies \Assuref{KCBSexclusivity} and obtain the value of the self-test $\mathcal{I}$. Using the robustness curve (see \Figref{experiment_result}), we obtain a lower bound on the fidelity of our quantum realisation to the ideal KCBS configuration. Independently, we perform tomography on our device and again evaluate the fidelity of our quantum realisation to the ideal KCBS configuration. We find that the fidelity obtained by direct tomography is indeed lower bounded by the fidelity predicted by the robustness curve.

\subsection{The setup | Outline \label{subsec:setup_outline}}

We describe our experimental setup which is designed to realise the ideal KCBS configuration. We use a $^{40}{\rm Ca}^{+}$ ion trapped in a blade-shaped Paul trap. The qutrit basis states are encoded into three Zeeman sub-levels of the $^{40}{\rm Ca}^{+}$ ion, with
\begin{align}
  \left|0\right\rangle &= \left|S_{1/2},m_{J}=-1/2\right\rangle,\\
  \left|1\right\rangle &= \left|D_{5/2},m_{J}=-3/2\right\rangle,\\
  \left|2\right\rangle &= \left|D_{5/2},m_{J}=-1/2\right\rangle,
\end{align}
as shown in \Figref{experiment_result}a.
Unitary operations on these qutrits are performed by shining a linear polarized narrow-linewidth 729 $\rm nm$ laser beam, propagating along the trap axis, at an angle of 45$^{\circ}$ with respect to the quantization axis. For $k\in\{1,2\}$, the laser can be frequency modulated to be on resonant with the specific transition between the states $\left|0\right\rangle$ and $\left|k\right\rangle$ by an acousto-optic modulator (AOM). Coupling strength $\Omega_{k}$, duration $t$ and phase $\phi_{k}$ of this laser pulse then can be controlled by this AOM to perform a rotation, $R_k(\theta_k,\phi_k)$ between the states $\ket{0}$ and $\ket{k}$, where $\theta_k = \Omega_k t$ and $\phi_k$ represent the polar angle and the azimuthal angle respectively, i.e.,
\begin{equation}
R_1(\theta_1,\phi_1) =
\begin{pmatrix}
    \mathrm{cos} \frac{\theta_1}{2} & -i \mathrm{sin} \frac{\theta_1}{2} e^{-i \phi_1} & 0\\
    -i \mathrm{sin} \frac{\theta_1}{2} e^{i \phi_1} &  \mathrm{cos} \frac{\theta_1}{2} & 0\\
    0 & 0 & 1 \\
\end{pmatrix},
\end{equation}

\begin{equation}
R_2(\theta_2,\phi_2) =
\begin{pmatrix}
    \mathrm{cos} \frac{\theta_2}{2} & 0 & -i \mathrm{sin} \frac{\theta_2}{2} e^{-i \phi_2}\\
     0 & 1  & 0\\
    -i \mathrm{sin} \frac{\theta_2}{2} e^{i \phi_2} & 0 & \mathrm{cos} \frac{\theta_2}{2} \\
\end{pmatrix}.
\end{equation}
The parameters of the rotation are controlled by varying the duration and phase of the corresponding laser pulse under constant intensity via the acoustic-optic modulator with high fidelity. Using $R_1$ and $R_2$, one can perform an arbitrary rotation in the qutrit space. Measurements in our setup are performed by fluorescence detection, i.e., counting photons. These operations can be combined to perform various experiments nearly satisfying \Assuref{KCBSexclusivity}. We delineate this procedure for realising the KCBS configuration (see \Defref{idealKCBSconfig}) at the logical level, followed by experimental details corresponding to each step.

\subsection{The KCBS configuration \label{subsec:KCBSconfig}}

Using the notation introduced in \Assuref{KCBSexclusivity}, we denote the binary measurements of interest by $M_i:=(\Pi_{0|i},\Pi_{1|i})$ and the quantum state of our system by $\rho$. To realise the KCBS configuration in \Defref{idealKCBSconfig}, we require $\Pi_{1|i} = \ket{u_i}\bra{u_i}$ and $\rho = \ket{0} \bra{0}$. To implement the measurement $\Pi_{1|i}$, as a composition of the operations our setup permits, we first compute a qutrit rotation $U_i$ which maps $\ket{u_i}$ to $\ket{0}$. This, in turn, is used to compute the parameters $\theta_{1,i},\theta_{2,i}$ and $\phi_{1,i},\phi_{2,i}$ such that $U_i=R_2(\theta_{2,i},\phi_{2,i}) \cdot R_1(\theta_{1,i},\phi_{1,i})$. Observe that defining $\Pi_{1|i}:=U_i \ket{0}\bra{0} U_i^{\dagger}$ and $\Pi_{0|i}:=U_i (\mathbb{I}-\ket{0}\bra{0}) U_i^{\dagger}$ results in the KCBS configuration.%

If experimental imperfections are neglected (which are indeed negligible in our experiment; see \Subsecref{noiseComment}), it is evident that \Assuref{KCBSexclusivity} is satisfied, i.e., $\Pi_{1|i}\cdot \Pi_{1|j} = 0$ for $(i,j)\in E$ and $\Pi_{1|i}^2 = \Pi_{1|i}$. Further, it is clear that
\be
p_i = {\rm Pr}(1|i) = {\rm Pr}(10|ij) = \tr( \Pi_{0|j} \cdot \Pi_{1|i} \rho \Pi_{1|i}^{\dagger} \cdot \Pi_{0|i}^{\dagger}),
\ee
where $\Pr (ab|ij)$ denotes the probability of obtaining outcomes $(a,b)$ when measurements $(M_i,M_j)$ are performed and $\Pr (a|i)$ denotes the probability of obtaining outcome $a$ given $M_i$ was measured. Finally, we note that $\Pr(10|ij) = \Pr(01|ji)$.

As illustrated in \Figref{experiment_result}, $\sum_k p_k$ is estimated by randomly choosing (using a quantum random number generator; see \Subsecref{noiseComment}) $(i,j) \in E$, measuring $M_i$ followed by $M_j$ and counting $N(10|ij)$, the number of times the output was $(1,0)$, i.e.,
\be
p_i \approx \frac{N(10|ij)}{\sum_{a,b\in\{0,1\}} N(ab|ij)}.
\ee

\subsection{Experimental setup | Details  \label{subsec:experimental_setup}}

The Calcium ions, in which two dark states and one bright state are chosen to encode the states \{$\left|1\right\rangle $,$\left|2\right\rangle $\} and $\left|0\right\rangle $, respectively.
Every experimental sequence starts with 1 ${\rm ms}$ Doppler cooling using a 397 ${\rm nm}$ laser and a resonant 866 nm laser. The 397 nm laser is red detuned approximately half a natural linewidth from resonating with the cycling transition between $S_{1/2}$ and $P_{1/2}$ manifolds. This is followed by 300 $\mu s$ electromagnetically-induced transparency (EIT) cooling sequence which cools the ion down to near the motional ground state. After that, 10 $\mu s$ optical pumping laser is applied to initialize the qutrit to the $\left|0\right\rangle $ state with 99.5\% fidelity. When needed, 729 ${\rm nm}$ pulse sequence is applied to prepare a certain initial state. Measurements of the observables $\{A_{i}\}$ are performed through coherent rotations
$R_{2}(\theta_{2},\phi_{2})R_{1}(\theta_{1},\phi_{1})$, a fluorescence detection followed by the operation $R_{1}^{\dagger}(\theta_{1},\phi_{1})R_{2}^{\dagger}(\theta_{2},\phi_{2})$ which undoes the previous rotation. The first fluorescence detection lasts for 220 $\mu s$ and uses laser settings close to those of Doppler cooling, to minimize the motional heating due to photon recoil (in case of a bright state). %
However, this brings the temperature of the bright state close to the Doppler limit. 
Therefore, a 150 $\mu s$ EIT cooling sequence followed by 10 $\mu s$ optical pumping pulses is applied to cool the bright state ion.
The detection and cooling time are greatly suppressed to minimize the random phase accumulation between the sequential measurement. Neither fluorescence detection nor EIT cooling process affects the temperature of a dark state ion, which is slightly heated at a rate of 140 quanta per second due to electronic noise. The second fluorescence detection uses resonant laser pulse and 300 $\mu s$ integrating time, which lowers detection error.
The result is acquired by counting the number of photons collected from photo multiplier tube (PMT) within the detection time. The statistical probability is calculated by repeating the same measurement 10000 times.

In our experiments the $2\pi$ pulse time for both transitions are adjusted to around 142 $\mu s$, that is, the Rabi frequency is as low as $\Omega_{1/2}=(2\pi)7$ KHz, making the AC-Stark shift below 100 Hz. The separation between $\left|1\right\rangle $
and $\left|2\right\rangle $ is $\omega_{2}-\omega_{1}=(2\pi)8.69$ MHz with corresponding magnetic field $B=5.18$ G. The maximum probability of off-resonant excitation $\Omega^{2}/(\omega_{2}-\omega_{1})^{2}$ is about 6.5 $\times$ $10^{-7}$, small enough to ensure the independence of each Rabi oscillation \cite{Pan2019,Zhang2018}.

\subsection{The experiment \label{subsec:theExp}}
To study the self-testing property of device, we perform experiments which deviate from the standard KCBS scenario, in two qualitatively different ways.

First, we consider the realisation $(\rho'(p),\{M_i\})$ parametrised by $0\le p \le 1$. Here, $M_i$ correspond to projectors along $\ket{u_i}$ as before, but instead of initialising the state of the device to $\rho = \ket{0}\bra{0}$, we initialise it to a mixture of $\rho$ and the maximally mixed state, i.e.,
\begin{equation} \label{eq:mixed_state_p}
 \rho' (p) := (1-p) \ket{0}\bra{0} + \frac{p}{9} \mathbb{I}.
\end{equation}

For $p=0$, we recover the ideal KCBS configuration for which $\mathcal{I}=0$. However, for $p>0$, \Assuref{KCBSexclusivity} still holds (neglecting the imperfections in the measurements) but $\mathcal{I}=\epsilon>0$. The experiment was performed for $p=0,0.1$ and $0.2$.

Second, we consider the realisation $(\ket{u_0'}\bra{u_0'}, \{M_i'(\theta)\})$ parametrised by the vector $\ket{u_0'}$ and the angle $\theta$. Here, unlike the previous case, the state is pure. Further, the measurements $M_i' = (\Pi_{1|i}',\Pi_{0|i}')$ now correspond to projections on\footnote{Defined exactly as $M_i$ was defined using $\ket{u_i}$; $\Pi_{1|i}'=U_i' \ket{0}\bra{0} U_i^{\prime \dagger}$ and $\Pi_{0|i}'=U_i(\mathbb{I} - \ket{0}\bra{0})U_i^{\prime \dagger}$ with $U'_i$ encoding a rotation from $\ket{u'_i}$ to $\ket{0}$.} $\ket{u'_i}$ for $i\in\{1,2\dots 5\}$. These vectors $\ket{u'_i}$ are chosen to be
\begin{align}
  \left|u'_{1}\right\rangle &=(1,0,0)^{T}, \\
  \left|u'_{2}\right\rangle &=(0,0,1)^{T}, \\
  \left|u'_{3}\right\rangle &=(-\cos (\theta), \sin (\theta), 0)^{T}, \\
  \left|u'_{4}\right\rangle &=\frac{(\sin (\theta), \cos (\theta), \sin (\theta))^{T}}{\sqrt{1+\sin ^{2}(\theta)}},\\
  \left|u'_{5}\right\rangle &=(0, \sin (\theta),-\cos (\theta))^{T}  ,
\end{align}
which ensures that for all angles $\theta$, \Assuref{KCBSexclusivity} holds. We performed the experiment for two values of these parameters:
\begin{align}
  \{\theta=22.041,\quad &|u'_{0}\rangle=(0.686,-0.245,0.686)^{T}\},\text{ and} \label{eq:Second_point1}\\
  \{\theta=150.612,\quad &|u'_{0}\rangle=(-0.649,-0.400,-0.649)^{T}\}. \label{eq:Second_point2}
\end{align}

\subsection{The data\label{subsec:the_data}}

For the normal ordered case, the results are summarised in \Figref{experiment_result}. See Figure~\ref{fig:reverse_robustness_plot} in Appendix~\ref{sec:data} for the summary of reverse ordered experiments. Note that we have two robustness curves in both of the aforementioned figures. To see why we have two robustness curves, notice that the observed statistic constraint in the SDP corresponding to \Algref{The-algorithm} can be tightened by specifying the value of each $\left\langle \Pi_{i}\right\rangle$ term. Obviously, if one specifies the value of $\left\langle \Pi_{i}\right\rangle$, the value for their sum automatically gets fixed. Based on the experimental data, we have access to the value of each of the $\left\langle \Pi_{i}\right\rangle$ and thus we can always obtain a tighter bound compared with the bound obtained via the sum constraint in the SDP for \Algref{The-algorithm}. We illustrate this by plotting the robustness curve obtained in the special case $p_1=p_2 =p_3 =p_4 =p_5$. One could have just as easily taken a  different choice for the values of $p_i$ such that their sum remains equal to the appropriate KCBS value and obtain a different robustness curve. All such ``curves'' obtained by providing the value for $p_i$s will be lower bounded by the curve obtained via the sum constraint (the blue curve). For our experimental data point with mean $\mu$ and standard deviation $\sigma$, we base our analysis on $\mu-1.96\sigma$, which is the smallest possible value with $95$ percent confidence for a normally distributed data. %
The experimental data has been summarised in Table~\ref{tab:violation_table} and in Appendix~\ref{sec:data}. The complete dataset can be accessed on GitHub~\cite{swapKCBS}.  We now present the detailed analysis. %

\subsubsection{The first set of experiments}

For the first set of experiments, for $p=0, 0.1$ and $0.2$, we obtained $\sum_k p_k$ to be $2.233$, $2.186$ and $2.118$, respectively for the normal ordered case. By solving the SDP outputted by \Algref{The-algorithm} with these values as inputs, the lower bounds on the total fidelity of these realisations to the KCBS configuration is $5.296$, $3.002$ and $2.343$ respectively. For the reverse ordered experiments  corresponding to $p=0, 0.1$ and $0.2$, we obtained $\sum_k p_k$ to be $2.236$, $2.182$ and $2.124$, respectively. The lower bound on the total fidelity from the SDP  for these cases were $5.892$, $2.933$ and $2.386$ respectively.  As mentioned before, we base our analysis on $\mu-1.96\sigma$ for the appropriate values of  $\mu$ and $\sigma$. This, however, has consequences on the constraint that we impose via the observed statistics. For our experimental data, we observe that $\sum_{i} (\mu \left(p_i\right) -1.96 \sigma \left(p_i\right) )$ is  upper bounded by $\mu\left(\sum_i p_i\right)-1.96 \sigma\left(\sum_i p_i\right)$. Since our robustness curves are very sensitive to minor changes near the region of maximum KCBS value $\left(\sum_i p_i\right)$, the lower bound on the fidelity provided by the sum constraint can be better than the equal statistic  constraint ($p_1=p_2 =p_3 =p_4 =p_5$). Since our goal is to provide tight lower bounds, we take the largest value, which is the one obtained from sum constraint close to the region of maximum KCBS value. Away from the maximum KCBS value region,  equal statistic constraint provides a better estimate (as evident from the plots in \Figref{experiment_result}).   We also computed the fidelity of these quantum realisations to the ideal KCBS configuration by performing state and measurement tomography. This yielded $5.965$, $5.901$ and $5.812$ respectively. The values are same for normal and reverse ordered experiments bacause an experiment being normal or reverse is determined by  quantum random number generator and has no explicit difference otherwise at the level of measurements and the state involved. As expected, the experimentaly obtained total fidelity is lower bounded by the values obtained via \Algref{The-algorithm}.

\subsubsection{The second set of experiments}

For the second set of experiments (\Eqref{Second_point1} and \Eqref{Second_point2}), we obtained $\sum_k p_k$ to be $2.058$ and $2.043$ respectively, for the normal ordered case. For the reverse ordered experiments, the corresponding values were $2.057$ and $2.048$. While the SDP based lower bound on the total fidelity for the normal ordered case was $2.561$, $2.222$; the corresponding values for the reverse ordered case were $2.579$ and $2.579$. 

Finding the lower bound on the total fidelity is a bit involved in this case. Our goal is to find the total fidelity of the experimentally realized configuration with a configuration which is related to the KCBS configuration  via an isometry. In the previous case, this isometry was an identity matrix (a trivial isometry).
Since the optimal configuration
is unique up to an isometry, any configuration which is related to the
KCBS configuration via an isometry can be considered an optimal configuration.
Our goal is to calculate the total fidelity of the experimental realization of 
$(\left|u_{0}^{\theta}\right\rangle \left\langle u_{0}^{\theta}\right|,\text{\ensuremath{\left\{ \left|u_{i}^{\theta}\right\rangle \left\langle u_{i}^{\theta}\right|\right\} }}_{i=1}^{5})$, 
say 
$(\rho_{0}^{E},\left\{  M_{i}^{E}\right\}_{i=1}^{5})$ 
to an optimal configuration, say 
$(\left|u_{0}^{O}\right\rangle \left\langle u_{0}^{O}\right|,\text{\ensuremath{\left\{ \left|u_{i}^{O}\right\rangle \left\langle u_{i}^{O}\right|\right\} }}_{i=1}^{5})$.
Let us denote the fidelity of $\vert u_{i}^{E}\rangle$ to $\vert u_{i}^{O}\rangle$
as $F_{i}^{E,O}$.Here,  $F_i^{E,O}$ denotes the fidelity of the $i$th component of the experimental realisation to that of the ideal configuration. Similarly other fidelities are denoted by $F_{i}^{E,\theta}$
and $F_{i}^{\theta,O}$.
We know that fidelty is not a valid metric, and hence, we will not
be able to apply triangle inequality directly. However, we can use
trace distance as a proxy to relate the aforementioned three configurations. After some simple calculation, we obtain  the following expression for the lower bound
on the total fidelity $\left(\sum_{i=0}^{5}F_{i}^{E,O}\right)$ (see Appendix~\ref{sec:fid_triangle} for details)

\begin{equation}
\sum_{i=0}^{5}F_{i}^{E,I}\geq6-\left(\sum_{i=0}^{5}\sqrt{1-F_{i}^{E,\theta}}\right)-\left(\sum_{i=0}^{5}\sqrt{1-F_{i}^{\theta,O}}\right).\label{eq:configuration_lower_bound}
\end{equation}
Based on the experimental data, we can calculate $\left(\sum_{i=0}^{5}\sqrt{1-F_{i}^{E,\theta}}\right)$.
The second total fidelity expression $\left(\sum_{i=0}^{5}\sqrt{1-F_{i}^{\theta,O}}\right)$
can be evaluated using numerical optimization. The experimental data has been summarised in Table~\ref{tab:violation_table}.

\begin{table}[]
\begin{tabular}{p{0.12\textwidth}p{0.07\textwidth}p{0.07\textwidth}p{0.09\textwidth}p{0.1\textwidth}p{0.1\textwidth}p{0.1\textwidth}ll}
\hline
Scenario      & \multicolumn{3}{c}{$\sum p_k $}        & \multicolumn{3}{c}{Fidelity (tomography)}   & \multicolumn{2}{c}{Fidelity (lower bound) using}  \\ %
              & $\mu$              & $\sigma$              & $\mu-1.96\sigma$        & $\mu$       & $\sigma$       & $\mu-1.96\sigma$  &$(\sum{p_k})_{(\mu - 1.96 \sigma)}$ &$((p_k)_{(\mu-1.96\sigma)})_k$ \\ \hline
$p=0$, N    & 2.249             & 0.009           & 2.233                      & 5.965      &  0.016   &   5.933            & 5.296       & 4.170              \\ 
$p=0$, R    & 2.255             & 0.011           & 2.236                      & 5.965      &  0.016   &   5.933            & 5.892       & 4.026              \\ 
$p=0.1$, N & 2.207             & 0.011           & 2.186                      & 5.901      &  0.016   &   5.870            & 2.934       & 3.002              \\ 
$p=0.1$, R & 2.203             & 0.011           & 2.182                      & 5.901      &  0.016   &   5.870            & 2.842       & 2.933              \\ 
$p=0.2$, N  & 2.140             & 0.011           & 2.118                      & 5.812      &  0.019   &   5.775            & 1.654       & 2.343              \\ 
$p=0.2$, R  & 2.145             & 0.011           & 2.124                      & 5.812      &  0.019   &   5.775            & 1.753       & 2.386              \\ 
$\theta=22.021$, N     & 2.078             & 0.010           & 2.058                      & 3.989      & 0.017    &   3.956            & 0.740       & 2.561              \\ 
$\theta=22.041$, R     & 2.077             & 0.010           & 2.057                      & 3.989      &   0.017  &   3.956            & 0.726       & 2.579              \\ 
$\theta=150.612$, N    & 2.062             & 0.010           & 2.043                      & 4.050      &   0.017  &   4.017            & 0.533       & 2.222              \\ 
$\theta=150.612$, R    & 2.068             & 0.010           & 2.048                      & 4.050      &   0.017  &   4.017            & 0.601       & 2.579              \\ \hline
\end{tabular}
\caption{The above table shows a summary of our experimental and numerical results. Scenario refers to the various experimental settings discussed in \Subsecref{theExp}. In the scenario column, ``N'' and ``R'' represent normal and reverse order. The data in the last column shows the lower bound on the total fidelity based on semidefinite programming from \Algref{The-algorithm}. Note that we discuss $\mu-1.96  \sigma$, which is the smallest possible value with $95$ percent confidence for a normally distributed data. %
}
\label{tab:violation_table}
\end{table}
\subsection{\label{subsec:noiseComment}Quantification of noise}

We look at three sources of deviation from our assumption---KCBS orthogonality (stated as \Assuref{KCBSexclusivity}), Perfect Detection (no post-selection) and Random Selection of Measurements (the experimenter is not correlated with the state being measured).

We implicitly made the last two assumptions in our analysis. \emph{Perfect detection} was assumed because our measurements were projective and do not allow for events where the measurement itself fails. Allowing for such a possibility amounts to post-selection which in turn allows non-contextual (classical) models to also violate the classical bound and therefore no self-test guarantees can be made in that case. 

We also assumed that the actions of the observer and the (quantum) state are uncorrelated. If such correlations are allowed, then the aforementioned issue again nullifies any self-test guarantees. \emph{Random Selection of Measurements} helps address this concern to some extent.\footnote{We do not discuss this further as this has been widely studied in the Bell nonlocality setting.}

\textbf{KCBS orthogonality.} Our theoretical analysis only applies for experiments which satisfy the KCBS orthogonality condition, i.e., \Assuref{KCBSexclusivity}. To simplify the discussion, we use $\Pi_i$ to denote $\Pi_{1|i}$. We slightly abuse the notation and use  $\Pi_i$ to denote the measurement itself as well. 
While it is impossible to perfectly satisfy \Assuref{KCBSexclusivity} in practice, our experiment comes close. Since we assume our measurements $\Pi_i$ to be \emph{projective}, they satisfy $\Pi_i^2=\Pi_i$. This in particular entails \emph{repeatability} at the level of observed statistics, $\langle \Pi_{i}^{2}\rangle =\langle \Pi_{i} \rangle $, i.e., the same measurement repeated twice should give the same result for a fixed state, which we quantify by considering 
\be 
R_i := \Pr(00|ii) + \Pr(11|ii),
\ee
where $\Pr(ab|ij)$ is the probability of obtaining outcomes $(a,b)$ when first $\Pi_i$ is measured, followed by $\Pi_j$ as defined in \Subsecref{KCBSconfig}. Corresponding to various measurement settings in different experimental scenarios, the repeatability in our experiment has been tabulated as \Tabref{repeatability_48,repeatability_150,repeatability_22} in Appendix~\ref{sec:noiseAppendix}. We find that the mean repeatability of our measurements for the entire experimental dataset is $\mu = 0.994217$ with standard deviation $\sigma = 0.00015985$.

We also assumed that our measurements $\Pi_i$ are cyclically \emph{orthogonal}, i.e., $\Pi_i \cdot \Pi_j = 0$ whenever $(i,j)\in E$. At the level of experimental statistics, this entails that 
\begin{enumerate}
\item $\Pr\left(11\vert ij\right)=\Pr\left(11\vert ji\right)=0$. This follows
from the fact that $\left\langle \Pi_{i}\Pi_{j}\right\rangle =\left\langle \Pi_{j}\Pi_{i}\right\rangle =0.$
\item $\Pr\left(10\vert ij\right)=\Pr\left(01\vert ji\right)$. This follows
from the fact that $\left\langle \Pi_{i}\left(\mathbb{I}-\Pi_{j}\right)\right\rangle =\left\langle \left(\mathbb{\mathbb{I}}-\Pi_{j}\right)\Pi_{i}\right\rangle .$
\item $\Pr\left(01\vert ij\right)=\Pr\left(10\vert ji\right)$. This follows
from the fact that $\left\langle \Pi_{j}\left(\mathbb{I}-\Pi_{i}\right)\right\rangle =\left\langle \left(\mathbb{\mathbb{I}}-\Pi_{i}\right)\Pi_{j}\right\rangle .$
\item $\Pr\left(00\vert ij\right)=\Pr\left(00\vert ji\right)$. This follows
from the fact that $\left\langle \left(\mathbb{I}-\Pi_{i}\right)\left(\mathbb{I}-\Pi_{j}\right)\right\rangle =\left\langle \left(\mathbb{I}-\Pi_{i}\right)\left(\mathbb{I}-\Pi_{j}\right)\right\rangle .$
\end{enumerate}
To quantify the deviations from \emph{orthogonality} in the experimental
data for projectors $\Pi_{i}$and $\Pi_{j}$, we use the following two quantities:
\begin{equation}
\delta_{ij}=\sum_{a,b\in\left\{ 0,1\right\} }\left|\Pr\left(ab\vert ij\right)-\Pr\left(ba\vert ji\right)\right|,\label{eq:orthogonality_1}
\end{equation}
and 
\be
o_{ij}:=p\left(11\vert ij\right).
\ee
Deferring the individual estimates of $\delta_{ij}$ and $o_{ij}$ to \Tabref{orthogonality,compatibility} in Appendix~\ref{sec:noiseAppendix}, respectively, we report their mean and standard deviation averaged over all $i,j$: $\delta=0.016 \pm 0.012$ and $o = 0.000550 \pm 0.0000984$. %

\textbf{Perfect detection.} A quantum jump detection is used to detect the probability in $\left|S_{1/2}\right\rangle $, which is the ``bright state''. The photons that radiate from the ion are collected through two objectives each with numerical aperture around ${\rm NA}=0.32$. In this experiment, we use two different settings of the detection laser for the first and second photon detection steps. In the first detection step, the frequency of 397 ${\rm nm}$ laser is red detuned by half the natural linewidth and the laser intensity is set to lower level to minimize the heating effect, and the detection time is suppressed to 220 ${\rm \mu s}$ to minimize the random phase accumulation and decoherence. We observe on average 12 photons for the state $\left|0\right\rangle $ and less than 2.5 photons for the state $\left|1\right\rangle $ or $\left|2\right\rangle $.
That is to say, the threshold value to distinguish bright state and dark state is set to $n_{ph}=2.5$.
The state detection error rates for wrongly registering the state $|0\rangle$ and missing the state $|0\rangle$ are $0.07\%$ and $0.2\%$, respectively.
For the second detection step, the laser frequency is set to on resonance. Together with 6 times higher laser power and 300 ${\rm \mu s}$ detection time, detection error is further suppressed. We observe on average 41 photons for the state $\left|0\right\rangle $ and 0.3 photons for the state $\left|1\right\rangle $ or $\left|2\right\rangle $. By setting the threshold to $n_{ph}=8.5$, the state detection error rates for wrongly registering the state $|0\rangle$ and missing the state $|0\rangle$ are far less than $0.04\%$. Since the detection efficiency of the ion trap is close to $100\%$, our detection is nearly perfect.

\textbf{Random selection of measurements.} The choice of observables to measure is made randomly in our experiment using a quantum random number generator (QRNG). %
Before each measurement, we prepare the ions to the $|0\rangle$ state. First, a random integer $m$ between 1 and 10 is generated. The random number $i = m \mod 5$ selects an observable $\Pi_{i}$ from $\{\Pi_{1},\Pi_{2},\dots,\Pi_{5}\}$, with $\Pi_{0}$ equal to $\Pi_{5}$, Then, one of the two observables $\Pi_{i\pm1}$ which are compatible with $\Pi_{i}$ is selected randomly by the sign of $5.5-m$. The random integer from the RNG determining control sequences are updated in real time (pregenerated before every experiment).\\ %

\subsubsection{Relation with prior contextuality experiments} \label{subsubsec:prior_noise}
To the best of our knowledge, in all previous experiments on contextuality, an operational approach was taken. This approach is of interest because for foundational questions, one must phrase the necessary assumptions in a theory-independent language. Our experiment may also be charactarised in this language---demonstration of perfect detection addresses \emph{the detection loophole} while the random selection of measurements addresses \emph{the free choice loophole}. We already saw a brief description of these loopholes. We now state the analogue of our KCBS orthogonality condition. At the operational level, the notion of contextuality is usually defined for a set of \emph{sharp} measurements which satisfy some \emph{compatiblity} relation among them. We briefly review their definitions. 

\textbf{Sharp measurements} are repeatable and minimally disturbing---any measurement compatible with it should yield the same output distribution regardless of whether it was measured or not. It is easy to see that sharp measurements imply, in particular, that $R_i=1$ and $\delta_{ij}=0$. Our data therefore also serves as evidence for sharpness of measurements.

\textbf{Compatibility.} Two observables are compatible if they can be jointly measured. In case the observables are measured sequentially, this implies that the order in which they are measured does not matter. Again, it is easily seen that this would entail $\delta_{ij}=0$. Our data, as discussed above, supports the conclusion that our measurements are targeting compatible observables (corresponding to the five cycle graph).\footnote{One might wonder how compatibility and sharpness are different (aside from the repeatability aspect). The difference is that for sharpness, the non-disturbance condition must hold for all compatible observables. The compatibility condition may only require one to verify a subset of these observables to be compatible.}

\section{Outlook and Conclusion}

The lower bounds on total fidelity obtained using semi-definite programming only hold when \Assuref{KCBSexclusivity} is satisfied. Relaxing this assumption to allow for measurements  with errors $\epsilon$ in repeatability and $\delta$ in orthogonality is left as an open question. A full analysis would render the lower bound on the total fidelity as a function of the KCBS value, $\delta$ and $\epsilon$. Note that the experimental points in Figure ~\ref{fig:experiment_result} lie above our robustness curve. The robustness curve obtained via the full analysis will be below the present curve and thus will still lower bound the total fidelity for the experimental data points. We remark that the existing robustness curves in Bell settings only consider fidelity lower bound as a function of quantum value and hence a similar full analysis should include the errors emanating from detection and locality loopholes. A naïve application of current techniques fails because \Algref{ConstructingAnIsometry} no longer yields a valid isometry. Complementary to this, improving the sharpness and exclusivity of measurements in the experiment is also an obvious yet important goal, which if properly achieved, may render the aforesaid irrelevant.

In this work, we considered the simplest self-testing scenario. Extending the analysis to $n>5$ (by making the algorithm more efficient), adapting it to more involved settings (possibly to anti-cycles), dropping the IID assumption (perhaps by using martingale analysis), obtaining analytic bounds (possibly via techniques from convex optimisation), we believe, are interesting avenues for further exploration.

\section*{Acknowledgements}
This work was supported by the National Key Research and Development Program of China (No. 2017YFA0304100, No. 2021YFE0113100), NSFC (No.\ 11734015, No.\ 11874345, No.\ 11821404, No.\ 11904357, No.\ 12174367, No.\ 11904402, No.\ 12074433, No.\ 12174447, No.\ 12004430, and No.\ 12174448), the Fundamental Research Funds for the Central Universities, USTC Tang Scholarship, Science and Technological Fund of Anhui Province for Outstanding Youth (2008085J02). AC is supported by Project Qdisc (Project No.\ US-15097, Universidad e Sevilla), with FEDER funds, QuantERA grant SECRET, by MINECO (Project No.\ PCI2019-111885-2), and MICINN (Project No.\ PID2020-113738GB-I00). LCK thanks the Ministry of Education, Singapore and the National Research Foundation Singapore for their support. KB~acknowledges funding by AFOSR, DoE QSA, NSF QLCI (award No.~OMA-2120757), DoE ASCR Accelerated Research in Quantum Computing program (award No.~DE-SC0020312), NSF PFCQC program, the DoE ASCR Quantum Testbed Pathfinder program (award No.~DE-SC0019040), U.S. Department of Energy Award No.\ DE-SC0019449, ARO MURI, AFOSR MURI, and DARPA SAVaNT ADVENT. ASA acknowledges funding provided by the Institute for Quantum Information and Matter. A substantial part of the work was done while ASA was at the Université libre de Bruxelles and was supported by the Belgian Fonds pour la Formation à la Recherche dans l’Industrie 4959 et dans l’Agriculture - FRIA, under grant No.\ 1.E.081.17F. ASA and JR were supported by the Belgian
Fonds de la Recherche Scientifique – FNRS, under grant
no R.50.05.18.F (QuantAlgo). The QuantAlgo project has received funding from the QuantERA European Research Area Network (ERA-NET) Cofund in Quantum Technologies implemented within the European Union’s Horizon 2020 program. We would like to express our gratitude to Jonathan Lau for his logistical assistance with computer-related matters. We thank Tobias Haug for various discussions.

\bibliographystyle{apsrev4-1}
\bibliography{ST_Experiment}

\appendix

\section{Experimental datapoints to assess repeatability and orthogonality\label{sec:noiseAppendix}}
In this section, we  present the details for repeatability and orthogonality corresponding to various measurement settings in different experimental scenarios. See Tables~\ref{tab:repeatability_48},~\ref{tab:repeatability_150} and~\ref{tab:repeatability_22} for details regarding repeatability. For orthogonality, refer to Tables~\ref{tab:orthogonality} and~\ref{tab:compatibility}.

\begin{table}[H]
  \centering
\begin{tabular}{lll}
\hline
Repeatability & $\mu$  & $\sigma$    \\ \hline
$R_1$            & 0.994 & 0.001 \\ 
$R_2$             & 0.991 & 0.002 \\ 
$R_3$             & 0.990 & 0.002 \\ 
$R_4$             & 0.993 & 0.001 \\ 
$R_5$             & 0.994 & 0.001 \\ \hline
\end{tabular}
\caption{The above Table shows the repeatability of the measurements corresponding to the first set of experiments. For every data-point, we collect $14999$ samples.}
\label{tab:repeatability_48}
\end{table}

\begin{table}[H]
  \centering
\begin{tabular}{lll}
\hline
Repeatability & $\mu$  & $\sigma$    \\ \hline
$R_1$             & 0.999 & 0.002     \\ 
$R_2$             & 0.995 & 0.001 \\ 
$R_3$             & 0.996 & 0.001 \\ 
$R_4$             & 0.992 & 0.001 \\ 
$R_5$             & 0.992 & 0.002 \\ \hline
\end{tabular}
\caption{The above Table shows the repeatability of the measurements corresponding to the second set of experiments for $\theta = 150.612$. Zero standard deviation in some of the cases appear due to rounding up to three digits after decimal.  For every data-point, we collect $14999$ samples.}
\label{tab:repeatability_150}
\end{table}

\begin{table}[H]
  \centering
\begin{tabular}{lll}
\hline
Repeatability & $\mu$  & $\sigma$    \\ \hline
$R_1$             & 0.999 & 0.001 \\ 
$R_2$             & 0.996 & 0.001 \\ 
$R_3$             & 0.999 & 0.002 \\ 
$R_4$             & 0.992 & 0.001 \\ 
$R_5$             & 0.994 & 0.001 \\ \hline
\end{tabular}
\caption{The above Table shows the repeatability of the measurements corresponding to the second set of experiments for $\theta = 22.041$. Zero standard deviation in some of the cases appear due to rounding up to three digits after decimal.  For every data-point, we collect $14999$ samples.}
\label{tab:repeatability_22}
\end{table}

\begin{table}[H]
  \centering
\begin{tabular}{p{0.2\textwidth}p{0.15\textwidth}p{0.15\textwidth}}
\hline
Scenario      & $\mu= 10^{-2}\times$ & $\sigma = 10^{-2}\times$\\ \hline
$p=0$, N    &  0.500     & 0.024    \\ 
$p=0$, R     &  0.819    &  0.040   \\ 
$p=0.1$, N &  0.875    & 0.040   \\ 
$p=0.1$, R  &   0.802   &  0.038  \\ 
$p=0.2$, N  &   0.801   &  0.040  \\ 
$p=0.2$, R  &  0.667    &  0.353  \\ 
$\theta=22.041$, N     &   0.385   &   0.027 \\ 
$\theta=22.041$, R     &    0.305  &  0.024  \\ 
$\theta=150.612$, N    &   0.210   & 0.020  \\ 
$\theta=150.612$, R    &   0.198   & 0.019   \\ \hline
\end{tabular}
\caption{The above table summarizes the deviation from orthogonality (quantified via $o_{ij}$ for measurements $i$ and $j$) for various experimental scenario. We present the mean $\mu$ and standard deviation $\sigma$ for the set of values $\{o_{ij}\}_{i=1}^{5}$, where $j=i+1$. Here, ``scenario'' refers to various experimental settings as discussed in the paper. In the scenario column, ``N'' and ``R'' represent normal and reverse order.}
\label{tab:orthogonality}
\end{table}

\begin{table}[H]
  \centering
  \begin{tabular}{llllllll}
  \hline
  Scenario   & $\delta_{12}$   & $\delta_{23}$   & $\delta_{34}$   & $\delta_{45}$   & $\delta_{51}$   & $\mu$  & $\sigma$    \\ \hline
  $p= 0$ & 0.014 & 0.030 & 0.017 & 0.027 & 0.003 & 0.018 & 0.010 \\ 
  $p =0.1$& 0.019 & 0.014 & 0.017 & 0.013 & 0.010 & 0.015 & 0.003 \\ 
  $p= 0.2$& 0.008 & 0.011 & 0.023 & 0.018 & 0.023 & 0.017 & 0.006 \\ 
  $\theta=22 $        & 0.006 & 0.011 & 0.029 & 0.010 & 0.004 & 0.012 & 0.009 \\ 
  $\theta=150$        & 0.008 & 0.008 & 0.010 & 0.008 & 0.008 & 0.008 & 0.001 \\ 
  Overall    & 0.007 & 0.005 & 0.022 & 0.037 & 0.009 & 0.016 & 0.012 \\ \hline
  \end{tabular}
  \caption{The above table shows the possible values of $\delta_{ij}$ for different experimental scenario. Intuitively speaking, $\delta_{ij}$ captures the difference in statistics for normal and reverse order experiments.}
  \label{tab:compatibility}
  \end{table}

\section{State and measurement tomography}
In this section, we discuss the experimental process involved towards measurement and state tomography.
We prepare the initial state in the $\left|0\right\rangle $ state with 10 $\mu s$ optical pumping. In order to find the fidelity of
this initial state, we did state tomography according to the Ref. \cite{Thew2001}. As for a qutrit, we can write the density matrix as $\rho=\frac{1}{3}\sum_{j=0}^{8}r_{j}\lambda_{j}$, where the $\lambda_{j}$ are the SU(3) generators and identity operator $\lambda_{0}$. The measurement basis $\{\left|\psi_{j}\right\rangle \}$ used in state tomography are chosen to be $\left|0\right\rangle $,
$\left|1\right\rangle $, $\left|2\right\rangle $, $(\left|0\right\rangle +\left|1\right\rangle )/\sqrt{2}$,
$(\left|0\right\rangle +i\left|1\right\rangle )/\sqrt{2}$, $(\left|0\right\rangle +\left|2\right\rangle )/\sqrt{2}$,
$(\left|0\right\rangle +i\left|2\right\rangle )/\sqrt{2}$, $(\left|1\right\rangle +\left|2\right\rangle )/\sqrt{2}$,
$(\left|1\right\rangle +i\left|2\right\rangle )/\sqrt{2}$, and we
use the projection operator $\left|\psi_{j}\right\rangle \left\langle \psi_{j}\right|$ as the generators. All the results are shown in Tables~\ref{tab:state_fidelity} and ~\ref{tab:state_fidelity_2}.

We did measurement tomography according to Ref. \cite{Fiu2001}. The probability $p_{lm}$ that the apparatus will respond with positive
operator-values measure (POVM) $\Pi_{l}$ when measuring the quantum state with density matrix can be expressed as $p_{lm}={\rm tr}[\Pi_{l}\rho_{m}]$,
where tr stands for the trace. Assuming that the theoretical detection probability $p_{lm}$ can be replaced with relative experimental frequency
$f_{lm}$, we may write ${\rm Tr}[\Pi_{l}\rho_{m}]=\sum_{i,j=1}^{N}\Pi_{l,ij}\rho_{m,ji}=f_{lm}$, where $N$ is the dimension of Hilbert space on which the operators $\Pi_{l}$ act. Then we can reconstruct the POVM $\Pi_{l}$ according to experimental result $f_{lm}$. The density matrix $\rho_{m}$ chosen in this process are the same as projection operators $\left|\psi_{j}\right\rangle \left\langle \psi_{j}\right|$ in state tomography. Also the maximum likelihood estimation(MLE) method is used in this process to ensure $\Pi_{l}\geqslant0$ and $\sum_{l=1}^{k}\Pi_{l}=I$, where $k$ is the number of measurement outcomes. The results are shown in Tables~\ref{tab:Theta_48_meas_fidelity}, \ref{tab:Theta_22_meas_fidelity} and ~\ref{tab:Theta_150_meas_fidelity}. \\

\section{Fidelity lower bound in the second set of experiments} \label{sec:fid_triangle}
The trace distance between two matrices $A$ and $B$ is given by
$D(A,B)=\frac{1}{2}\left\Vert A-B\right\Vert _{tr}$, where $\left\Vert \right\Vert _{tr}$
denotes the trace norm. If $A$ and $B$ are density matrices, then
we have the following lower and upper bounds on the trace distance
in terms of fidelity,

\begin{equation}
1-\sqrt{F(A,B)}\leq D(A,B)\leq\sqrt{1-F(A,B).}\label{eq:bound_trace_distance}
\end{equation}
After applying traingle inequality for density matrices $A$, $B$
and $C$ gives

\begin{equation}
1-\sqrt{F(A,C)}\leq D(A,C)\leq D(A,B)+D(B,C)\leq\sqrt{1-F(A,B)}+\sqrt{1-F(B,C)}.\label{eq:traingle_inequality_fidelity}
\end{equation}
This gives a version of ``traingle inequality for fidelity,''

\begin{equation}
\sqrt{F(A,C)}\geq1-\sqrt{1-F(A,B)}-\sqrt{1-F(B,C)}.\label{eq:final_triangle_inequality_fidelity}
\end{equation}
If at least one of the states among $A$ and $C$ is a pure state,
one can tighten the lower bound on $D(A,C)$,

\begin{equation}
1-F(A,C)\leq D(A,C).\label{eq:pure-state_lower bound.}
\end{equation}
Thus, we have

\begin{equation}
F(A,C)\geq1-\sqrt{1-F(A,B)}-\sqrt{1-F(B,C)}.\label{eq:pure_state_fidelity_triangle_inequality}
\end{equation}
We use inequality \ref{eq:pure_state_fidelity_triangle_inequality}
to relate the total fidelity of the experimental configuration $( \rho^E, \left\{ \Pi_i^E \right\} )$ 
to an optimal configuration $( \ket{u_0^O}\bra{u_0^O}, \{ \ket{u_i^O} \bra{u_i^O} \}_i )$. 
Let us denote the fidelity of the experimental configuration to $\vert u_{i}^{O}\rangle$
as $F_{i}^{E,O}$. Similarly other fidelities are denoted by $F_{i}^{E,\theta}$
and $F_{i}^{\theta,O}$. It is easy to see the following lower bound
on the total fidelity $\left(\sum_{i=0}^{5}F_{i}^{E,O}\right)$:
\begin{equation}
\sum_{i=0}^{5}F_{i}^{E,O}\geq6-\left(\sum_{i=0}^{5}\sqrt{1-F_{i}^{E,\theta}}\right)-\left(\sum_{i=0}^{5}\sqrt{1-F_{i}^{\theta,O}}\right).\label{eq:configuration_lower_bound}
\end{equation}

\section{Further details about the experimental data}\label{sec:data}
In this section, we discuss further details regarding the experimental data. In~\Figref{reverse_robustness_plot}, we present the data corresponding to the reverse order experiments. The fidelity of the prepared quantum state to the state to be prepared  is shown in Tables~\ref{tab:state_fidelity} and ~\ref{tab:state_fidelity_2}. For the first set of experiments corresponding to $\theta=48$, the fidelity of the implemented  projector to the the projector to be implemented has been shown in Table~\ref{tab:Theta_48_meas_fidelity}. For the second set of experiments corresponding to (\Eqref{Second_point1} and \Eqref{Second_point2}), the corresponding fidelities has been shown in Tables~\ref{tab:Theta_22_meas_fidelity} and ~\ref{tab:Theta_150_meas_fidelity}.

\begin{figure}[htbp]
    \centering
    \includegraphics[width=0.4\paperwidth]{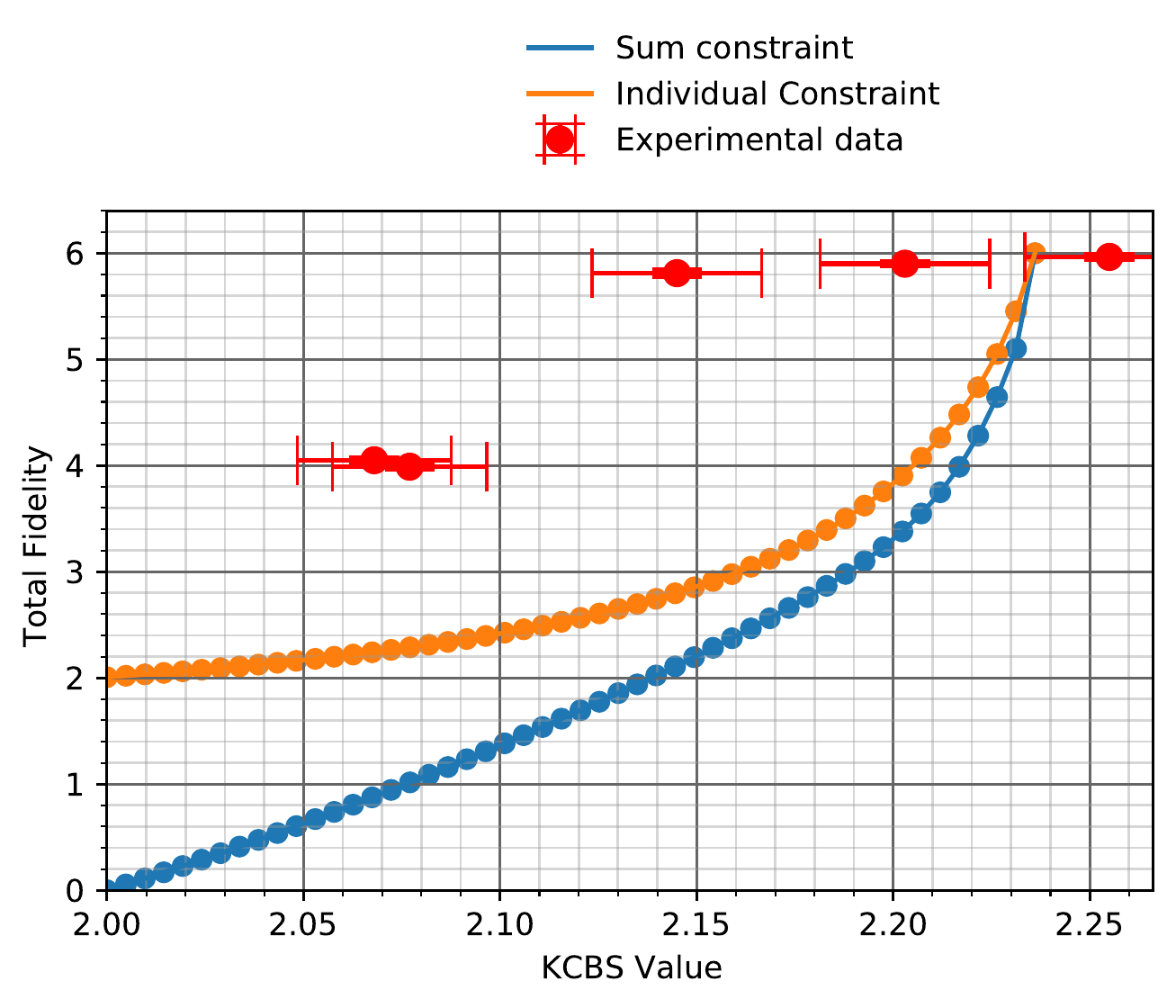}
    \caption{The above plot is the analog of ~\Figref{experiment_result} corresponding to the data obtained via reverse ordered experiments.}
    \label{fig:reverse_robustness_plot}
\end{figure}

\begin{table}[]
\begin{tabular}{lll}
\hline
State             & Fidelity to self & Fidelity to $\ket{0}$ \\ \hline
$\ket{0}$                 & 0.999 $\pm$0.008         & 0.999 $\pm$ 0.008      \\ 
$\ket{1}$                     & 0.999 $\pm$0.006         & 0.004 $\pm$ 0.002       \\ 
$\ket{2}$                     & 0.993 $\pm$ 0.012        & 0.005 $\pm$ 0.002      \\ 
Mixed with $p=0.1$ & 0.996 $\pm$ 0.010         & 0.935$\pm$ 0.008      \\ 
Mixed with $p=0.2$  & 0.992 $\pm$ 0.008         & 0.846 $\pm$ 0.013     \\ \hline
\end{tabular}
\caption{The above table shows the fidelity of the prepared state to the fidelity of the state to be prepared (the state $\ket{0}$) in row $1$,$4$ and $5$. Here, the state with mixedness $p$ represents the density matrix $ \rho' (p) = (1-p) \ket{0}\bra{0} + \frac{p}{9} \mathbb{I}.$ The numbers after $\pm$ represent the standard deviation  values. The data in row $2$ and $3$ is for sanity check, i.e, the state state $\ket{1}$ and $\ket{2}$ should be orthogonal to the state $\ket{0}.$ For every data-point, we collect $4000$ samples.}
\label{tab:state_fidelity}
\end{table}

\begin{table}[]
\begin{tabular}{lll}
\hline
State           & Fidelity & $\sigma$ \\ \hline
$\theta$ = 22.041   & 0.994    & 0.008 \\ 
$\theta$ = 150.612 & 0.995    & 0.009 \\ \hline
\end{tabular}
\caption{ The above table shows the fidelity of the experimentally prepared state to the state to be prepared corresponding to \Eqref{Second_point1} and \Eqref{Second_point2}. For every data-point, we collect $4000$ samples.}
\label{tab:state_fidelity_2}
\end{table}

\begin{table}[]
\begin{tabular}{lll}
\hline
Measurements & Fidelity & $\sigma$ \\ \hline
$\Pi_1$         & 0.993    & 0.006 \\ 
$\Pi_2$         & 0.991    & 0.006 \\ 
$\Pi_3$         & 0.996    & 0.004 \\ 
$\Pi_4$         & 0.995    & 0.006 \\ 
$\Pi_5$         & 0.990    & 0.009 \\ \hline
\end{tabular}
\caption{The above table shows the fidelity of the experimentally implemented projectors to the projectors to be implemented corresponding to the first set of experiments. For every data-point, we collect $15000$ samples.}
\label{tab:Theta_48_meas_fidelity}
\end{table}

\begin{table}[]
\begin{tabular}{lll}
\hline
Measurements & Fidelity & $\sigma$ \\ \hline
$\Pi_1$         & 0.997    & 0.004 \\ 
$\Pi_2$         & 0.998    & 0.005 \\ 
$\Pi_3$         & 0.997    & 0.005 \\ 
$\Pi_4$         & 1.000    & 0.007 \\ 
$\Pi_5$         & 0.997    & 0.006 \\ \hline
\end{tabular}
\caption{The above table shows the fidelity of the experimentally implemented projectors to the projectors to be implemented corresponding to the second set of experiments for $\theta = 22.041$. For every data-point, we collect $15000$ samples.}
\label{tab:Theta_22_meas_fidelity}
\end{table}

\begin{table}[]
\begin{tabular}{lll}
\hline
Measurements & Fidelity & $\sigma$ \\ \hline
$\Pi_1$         & 0.995    & 0.004 \\ 
$\Pi_2$         & 0.995    & 0.007 \\ 
$\Pi_3$         & 0.994    & 0.005 \\ 
$\Pi_4$         & 0.996    & 0.005 \\ 
$\Pi_5$         & 0.993    & 0.005 \\ \hline
\end{tabular}
\caption{The above table shows the fidelity of the experimentally implemented projectors to the projectors to be implemented corresponding to the second set of experiments for $\theta = 150.612$. For every data-point, we collect $15000$ samples.}
\label{tab:Theta_150_meas_fidelity}
\end{table}

\end{document}